\definecolor{darkblue}{rgb}{0,0,.5}
\newcommand{\bigO}{$\mathcal{O}$}
\newcommand{\tens}{\otimes}
\newcommand{\hal}{\mathcal{H}}
\newcommand{\del}{\partial}
\newcommand{\bigslant}[2]{{\raisebox{.2em}{#1}\left/\raisebox{-.2em}{#2}\right.}}
\newcommand{\homk}{\textbf{$H_k$}}
\newtheorem{theorem}{Theorem}
\newtheorem*{problem*}{Problem}
\begin{document}

\preprint{APS/123-QED}

\title{Complexity-Theoretic Limitations on Quantum Algorithms for Topological Data Analysis}% Force line breaks with \\
%\thanks{A footnote to the article title}%

\author{Alexander Schmidhuber}
\email{alexsc@mit.edu}
 \affiliation{Institute for Theoretical Physics, ETH Zurich, Zurich 8092, Switzerland}
 \affiliation{Center for Theoretical Physics, Massachusetts Institute of Technology, Cambridge, MA 02139}%Lines break automatically or can be forced with \\
\author{Seth Lloyd}%
 \affiliation{Department of Mechanical Engineering, Massachusetts Institute of Technology, Cambridge, MA 02139}
\affiliation{Turing Inc., Cambridge, MA 02139}

\date{December 21, 2023}% It is always \today, today,
             %  but any date may be explicitly specified

\begin{abstract}
Quantum algorithms for topological data analysis (TDA) seem to provide an exponential advantage over the best classical approach while remaining immune to dequantization procedures and the data-loading problem. In this paper, we give complexity-theoretic evidence that the central task of TDA -- estimating Betti numbers -- is intractable even for quantum computers. Specifically, we prove that the problem of computing Betti numbers exactly is \#P-hard, while the problem of approximating Betti numbers up to multiplicative error is NP-hard. Moreover, both problems retain their hardness if restricted to the regime where quantum algorithms for TDA perform best. Because quantum computers are not expected to solve \#P-hard or NP-hard problems in subexponential time, our results imply that quantum algorithms for TDA offer only a polynomial advantage in the worst case. We support our claim by showing that the seminal quantum algorithm for TDA developed by Lloyd, Garnerone and Zanardi achieves a quadratic speedup over the best known classical approach in asymptotically almost all cases. Finally, we argue that an exponential quantum advantage can be recovered if the input data is given as a specification of simplices rather than as a list of vertices and edges.
\end{abstract}

%\keywords{Suggested keywords}%Use showkeys class option if keyword
                              %display desired
\maketitle

%\tableofcontents

\section{Introduction}
One of the major challenges in quantum information science is the development of quantum algorithms with an exponential advantage over the best classical counterpart. Some of the most promising candidates in that regard have been quantum machine learning (QML) algorithms \cite{QML1,Harrow_2009,QML2,QML3}. However, two issues have challenged the prospect of exponential speedups for many QML algorithms in recent years. On the one hand, a series of impressive dequantization results by Tang et al. \cite{Tang,Chia_2020} showed that several QML problems based on low-rank matrix analysis can be solved efficiently on a classical computer. On the other hand, most remaining QML algorithms require the encoding of a large input into a quantum state \cite{scott}, which requires large quantum random access memory (qRAM).
 
These issues have resulted in increased efforts to find quantum algorithms that are immune to dequantization procedures and do not suffer from the data-loading problem. One prominent example is the quantum algorithm for topological data analysis (TDA) developed by Lloyd, Garnerone and Zanardi \cite{LloydAlgo}, which has experienced a recent surge of attention from a theoretical \cite{linearDepth,logDepth,persistent,quantumAdvantage,reviewBetti,GoogleBettiBerry,AmazonBettiMcArdle} as well as experimental \cite{experimentalBetti,IBMresult} perspective. This quantum algorithm, which we will refer to as LGZ-algorithm, does not require a large classical input and is immune to standard dequantization procedures. Nevertheless, it solves the task of estimating the Betti numbers of a simplicial complex - the essential task of TDA - and does so seemingly exponentially faster than the best classical algorithm, provided that certain assumptions regarding the input hold. 

In this paper, we show that any quantum algorithm for TDA runs in exponential time in the worst case by proving that (under widely believed complexity-theoretic assumptions) the problem of topological data analysis is intractable even for quantum computers. Specifically, we prove that the problem of computing Betti numbers exactly is \#P-hard and that the problem of approximating Betti numbers up to multiplicative error is NP-hard. Moreover, we show that both problems retain their hardness if the input is restricted to clique-dense complexes, which is the regime where the LGZ-algorithm works best \cite{LloydAlgo,quantumAdvantage,reviewBetti}. Building upon this result, we argue that the LGZ-algorithm for TDA runs in exponential time not only in the worst case, but for asymptotically almost all inputs. We provide direct evidence for this claim by investigating the runtime of the LGZ-algorithm on random Vietoris-Rips complexes, for which we show that it achieves a Grover-like speedup over the best known classical algorithm in almost all cases. Our complexity-theoretic results and average-case analysis do not rule out an exponential quantum advantage over the best classical algorithm for some specific inputs. Indeed, there exists a known family of simplicial complexes for which the LGZ-algorithm achieves a super-polynomial speedup \cite{GoogleBettiBerry}. 

%Because our proofs of hardness hold even if restricted to the regime where all currently known quantum algorithms for TDA perform best - and because the relative size of this regime compared to all possible inputs tends towards $0$ as the input size increases - we argue that quantum algorithms for TDA run in exponential time not only in the worst case, but for asymptotically almost all cases. 

We further provide a complexity-theoretic analysis of each step of the LGZ-algorithm, through which we show that the subroutine limiting quantum advantage in TDA is not actually the computation of Betti numbers given a description of the simplicial complex, but the construction of a quantum state representing the simplicial complex from a description of the underlying graph. We show that the latter problem is \#P-hard by itself. It is thus the problem of sampling simplices given a list of vertices and edges that forms the bottleneck of topological data analysis.   Provided with an oracle that enables random sampling from the $k$-simplices of a simplicial complex, a modification of the LGZ-algorithm can estimate its Betti numbers in polynomial time, exponentially faster than the best known classical approach with access to the same oracle. We study problems in data analysis and computational topology where such oracles appear naturally. In particular, when the data is given in the form of a list or specification of sets (simplices) and their members (vertices), an exponential quantum advantage could be recovered. This is the case, for example, when we are given the ability to sample from a list of Facebook groups together with the members of the sampled groups, and when a $k$-simplex is defined to be a group of $k+1$ individuals all of whom are members of the same group. 

The remaining article is structured as follows: In section 2, we introduce the framework of topological data analysis and describe the original LGZ Betti number quantum algorithm developed by Lloyd, Garnerone and Zanardi, along with a brief summary of recent improvements. In section 3, we study the following formal problem of computing Betti numbers:
\begin{problem*}[\textbf{Betti}]
Given a clique complex $S$ defined by its vertices and edges and an integer $k\geq0$ as input, output the $k$-th Betti number $\beta_k$ of $S$.
\label{prob:Betti} 
\end{problem*}
\noindent The problem is defined exactly in the way that it would appear in practical applications of TDA, where the vertices represent a set of data points embedded in a metric space, and the edges describe whether two points are close to each other or not. The complexity of computing Betti numbers was left as an open problem in the review article \cite{openProblem}. We show that the problem $\textbf{Betti}$ is intractable for quantum computers
even if restricted to the optimal regime of clique-dense complexes, by establishing its hardness via the following theorem:
\begin{theorem}The problem \emph{\textbf{Betti}} is \emph{(a)} \#P-hard and \emph{(b)} remains \#P-hard when the input is restricted to clique-dense complexes.\label{thm:Betti}\end{theorem}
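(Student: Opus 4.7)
The plan is to give a polynomial-time reduction from a canonical \#P-complete problem to \textbf{Betti}, and then extend the reduction so that the output is a clique-dense complex for part (b). Natural starting points are \#Clique or \#IS, both \#P-complete.

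For part (a), I would construct from an input graph $H$ and target clique size $r$ a new graph $G$ and a dimension $k$ such that $\beta_k(\mathrm{Cl}(G))$ equals the number of $r$-cliques in $H$. The construction is gadget-based: augment $H$ with auxiliary vertices and edges so that each $r$-clique of $H$ contributes exactly one generator to $H_k(\mathrm{Cl}(G))$, while non-cliques contribute nothing. A concrete realization attaches to every candidate $r$-subset a topological detector --- a join or wedge of small spheres, engineered as a flag complex --- that closes up into a $k$-cycle if and only if the subset spans a clique. Ensuring that the detectors produce linearly independent homology generators (so that $\beta_k$ exactly counts the witnesses) is arranged by placing them in topologically disjoint or wedge-attached regions of $\mathrm{Cl}(G)$. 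A cleaner alternative invokes the known \#P-hardness of computing Betti numbers of general, implicitly specified simplicial complexes and composes with barycentric subdivision: for any simplicial complex $K$, the subdivision $\mathrm{sd}(K)$ is always a flag complex, satisfies $\beta_k(\mathrm{sd}(K)) = \beta_k(K)$, and can be described by its 1-skeleton in polynomial time.

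For part (b), I would boost the output of (a) to be clique-dense by wedging, at a chosen base vertex, with the clique complex of a complete graph $K_N$ on $N = \mathrm{poly}(n)$ auxiliary vertices. That clique complex is the full $(N-1)$-simplex --- contractible and containing $\binom{N}{j+1}$ simplices in each dimension $j$. Because wedge sum preserves reduced Betti numbers in positive dimensions, and because the wedge of two clique complexes (identified at a single vertex with no additional cross-edges) remains a clique complex on the corresponding wedge of 1-skeletons, the combined complex has $\beta_k$ unchanged and contains super-polynomially many $k$-simplices. The main obstacle will be the gadget design in part (a) --- producing subcomplexes that detect cliques via homology and verifying linear independence of the resulting cycles in the flag setting --- together with calibrating the output to the precise formalization of \emph{clique-dense} used in the LGZ analysis; if the required density notion is a lower bound on the ratio of $k$-cliques per vertex rather than mere super-polynomial growth, the simple wedge may need to be replaced by a join or cone-off against a dense flag complex of trivial $k$-th homology, but such refinements are routine once the core reduction is in place.
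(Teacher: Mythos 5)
Your part (a) is where the proposal breaks down, and both routes you offer fail for the same concrete reason: polynomial size. In the gadget route, \#Clique is only \#P-hard when the target size $r$ is part of the input and can grow with $n$; attaching a ``topological detector'' to \emph{every candidate $r$-subset} then requires $\binom{n}{r}$ gadgets, which is exponentially many, so the reduction is not polynomial-time. For the reduction to work, the detectors would have to arise \emph{implicitly} from the clique structure of a single polynomial-size graph, and arranging that (while keeping the resulting cycles linearly independent in a flag complex, where you cannot freely forbid higher simplices) is precisely the hard part you defer. The barycentric-subdivision route has the same defect: $\mathrm{sd}(K)$ has one vertex per face of $K$, and the instances for which computing Betti numbers of general complexes is hard are exactly the succinctly specified ones (e.g., by maximal faces) with exponentially many faces --- the paper explicitly notes that translating a clique complex between the edge-list and maximal-simplex descriptions is itself \#P-hard. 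So $\mathrm{sd}(K)$ cannot be written down, let alone described by a polynomial-size $1$-skeleton. Your topological facts (subdivision is flag and preserves $\beta_k$; wedges of clique complexes at a vertex are clique complexes and preserve reduced homology in positive degrees) are correct, but they do not yield a polynomial reduction.

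The paper sidesteps the need for a single Betti number to count witnesses. It reduces \#SAT to the \emph{Euler characteristic} $\chi(S)=\sum_k(-1)^k\beta_k$ of a clique complex on $3n+k$ vertices built from the formula ($t_i,f_i,p_i$ per variable, one vertex per clause, with non-edges encoding which literals satisfy which clauses), and shows via a pairing/involution on cliques that all cliques except the maximal ones encoding satisfying assignments cancel in the alternating sum. An oracle for \textbf{Betti} at all $k$ then yields $\chi$, hence the count. This is weaker than making one $\beta_k$ equal the count (the paper only sketches that strengthening), but it is what makes a polynomial-size flag-complex reduction feasible. For part (b), your wedge-with-$K_N$ idea is sound in principle (with $N$ a large enough polynomial the ratio $\binom{n+N}{k+1}/\binom{N}{k+1}$ stays polynomially bounded even for $k=\Theta(n)$), and is a genuinely different, arguably more modular route than the paper's, which instead verifies via the clique-density theorem that the SAT graph already has edge density above $\tfrac12(1-\tfrac1n)$ and hence $\Omega(N^{k+1})$ many $k$-cliques. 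But (b) is moot until (a) is repaired.
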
 
\noindent  However, the LGZ-algorithm does not compute Betti numbers exactly but only approximately. Theorem \ref{thm:Betti} is therefore not enough to reason about the runtime of the LGZ-algorithm. Indeed, there are examples of polynomial quantum algorithms that efficiently approximate \#P-hard problems, such as the Jones polynomial \cite{Jones} or the Potts model \cite{Potts}. In section 4, we show that even approximating Betti numbers up to any multiplicative error remains intractable for quantum computers (again, under the widely-held complexity-theoretic assumption that quantum computers cannot access NP-hard problems). To do so, we will study the problem of deciding which Betti numbers are non-zero. 
\begin{problem*}[\textbf{Homology}]
Given a clique complex $S$ defined by its vertices and edges and an integer $k\geq0$ as input, output \emph{true} if $\beta_k > 0$ and \emph{false} if $\beta_k = 0.$\label{prob:Homology}
\end{problem*}
\noindent Problem \textbf{Homology} is the natural decision version of the counting problem \textbf{Betti}. It was already shown by Adamaszek and Stacho \cite{adamaszek2016} that this problem is NP-hard for complexes defined on co-chordal graphs. Here, we strengthen this result by showing that it remains NP-hard on clique-dense complexes.
\begin{theorem}
The problem \emph{\textbf{Homology}} is \emph{(a)} NP-hard and \emph{(b)} remains NP-hard when the input is restricted to clique-dense complexes.\label{thm:Homology}
\end{theorem}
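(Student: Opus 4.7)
The plan is to handle the two parts separately; part (a) essentially follows from prior work, whereas part (b) requires a polynomial-time densification of the hard instances.

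For (a), I would invoke Adamaszek and Stacho \cite{adamaszek2016}, who reduce an NP-complete problem to deciding whether $\beta_k(\mathrm{Cl}(G)) > 0$ for a co-chordal graph $G$. Since a co-chordal graph is specified by its vertex and edge set, their construction is already a valid input to \textbf{Homology}, so NP-hardness transfers without additional work.

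For (b), I would give a polynomial-time reduction from \textbf{Homology} on an arbitrary clique complex to \textbf{Homology} restricted to clique-dense complexes. The proposed construction is the graph join $G' = G * H_m$, where $H_m$ is the complement of a perfect matching on $2m$ vertices; its clique complex is the boundary of the $m$-dimensional cross-polytope and hence homotopy equivalent to $S^{m-1}$. Because $\mathrm{Cl}(G * H_m) = \mathrm{Cl}(G) * \mathrm{Cl}(H_m)$ as simplicial complexes and $X * S^{m-1} \simeq \Sigma^m X$, the Künneth formula for joins yields $\tilde{H}_{k+m}(\mathrm{Cl}(G')) \cong \tilde{H}_k(\mathrm{Cl}(G))$. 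Hence the reduction sends $(G,k)$ to $(G', k+m)$ and preserves the \textbf{Homology} answer (for $k \geq 1$; the $k=0$ case is handled by working with reduced Betti numbers). The map is clearly polynomial-time when $m$ is polynomial in $n$.

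The principal obstacle is verifying clique-density of $G'$ at the shifted dimension, i.e., that the ratio $|K_{k+m+1}(G')|/\binom{n+2m}{k+m+1}$ is at least inverse-polynomial in $n+2m$. Since cliques in $G'$ decompose as unions of cliques in the two factors, a direct enumeration gives $|K_{k+m+1}(G')| = \sum_i N_i(G)\,\binom{m}{k+m+1-i}\,2^{k+m+1-i}$, where $N_i(G)$ is the number of $i$-cliques of $G$. Bounding this from below requires a careful choice of $m$ (likely $m = \Theta(n)$) and may require replacing $\overline{M_m}$ with a denser clique-complex realization of $S^{m-1}$, or composing the join with a ``vertex fattening'' step that replaces each vertex of $H_m$ by a small clique, thereby inflating the simplex count without disturbing the homotopy type. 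Making this density calculation quantitative — and confirming that the resulting family of complexes falls in the regime formally called clique-dense in the paper — is the technical crux of the argument.
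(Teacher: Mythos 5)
Your part (a) is the same as the paper's: both simply inherit NP-hardness from Adamaszek and Stacho. For part (b) you take a genuinely different route. The paper passes through Alexander duality and the suspension isomorphism to turn the co-chordal instance into the complement $\bar H$ of a sparse bipartite graph; since $\bar H$ is missing only $2m = O(N)$ of its $\binom{N}{2}$ edges, its edge density is $\frac12(1-O(1/N))$ and the clique-density theorem certifies density at every level $k \le N/5$ with no further work. Your route --- join with an octahedral sphere --- is homologically sound: $\mathrm{Cl}(G \ast \overline{M_m}) = \mathrm{Cl}(G) \ast \mathrm{Cl}(\overline{M_m})$, the latter factor is the boundary of the cross-polytope $\simeq S^{m-1}$, and $\tilde H_{k+m}(X \ast S^{m-1}) \cong \tilde H_k(X)$, so $(G,k)\mapsto(G\ast\overline{M_m},\,k+m)$ preserves the answer (and $k=0$ is trivial). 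This is arguably more elementary than the paper's argument and reduces directly from part (a) rather than re-deriving hardness from co-chordal graphs.

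The gap is exactly the step you defer: the density verification, and as parametrized it fails. With $m=\Theta(n)$ the join on $N=n+2m=\Theta(n)$ vertices is missing $\binom{n}{2}-|E(G)|+m=\Theta(N^2)$ edges, so its edge density $\gamma$ is bounded away from $\frac12$ by a constant and the clique-density criterion $\gamma>\frac12(1-\frac1k)$ --- which is the paper's operational definition of clique-dense, via Eq.~\eqref{cliqueDense} --- certifies density only for constant $k$, not at the target level $k+m=\Theta(n)$. Moreover, your facet-counting worry is real but your first proposed fix (a ``denser'' flag triangulation of $S^{m-1}$) cannot work: any such complex is $(m-1)$-dimensional, hence $K_{m+1}$-free, and Tur\'an caps its edge density at $\frac12(1-\frac1m)$, below the threshold needed at level $k+m$. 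Your second fix, vertex fattening, is the right one, but it must be made quantitative: replacing each vertex of $\overline{M_m}$ by a $t$-clique preserves the homotopy type $S^{m-1}$ while the resulting graph on $2tm$ vertices is missing only $mt^2$ edges, giving $\gamma=\frac12(1-\frac{1}{2m})$ up to the $G$-contribution; the criterion at level $k+m$ then requires $k+m+1<2m$, i.e.\ $m>k+1$, together with $t=\mathrm{poly}(n)$ large enough to swamp the $\binom{n}{2}$ missing edges of $G$. Until you fix $m>k+1$ and carry out this calculation, the reduction does not land in the clique-dense regime and Theorem~\ref{thm:Homology}(b) is not established.
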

\noindent Evidently, Theorem \ref{thm:Homology} directly implies that computing any multiplicative approximation of Betti numbers is NP-hard and therefore not accessible to quantum computers. Because Theorem \ref{thm:Betti} and Theorem \ref{thm:Homology} hold even if restricted to the regime where the currently known quantum algorithms for TDA perform best - and because the relative size of this regime compared to all possible simplicial complexes tends towards $0$ as the number of vertices increases - our results imply that the LGZ-algorithm runs in exponential time for almost all cases. 
In section 5, we verify this claim by investigating the runtime of the LGZ-algorithm on random Vietoris-Rips complexes. We show that it achieves a Grover-like quadratic speedup over classical algorithms in asymptotically almost  cases. However, up to super-polynomial speedups can be achieved for specifically selected inputs.  

In section 6, we subsequently discuss problems beyond TDA for which a modification of the LGZ-algorithm achieves an exponential speedup. The main assumption enabling exponential quantum advantage is the access to random sampling from the simplices of a simplicial complex. We list examples of problems where such an oracle might appear naturally. Finally, in section 7, we comment on the computational hardness of computing normalized Betti numbers, which is a quantity that is not directly useful in practice but a more natural output of quantum algorithms for TDA.\\

\noindent\textbf{Remark:} \emph{While preparing for publication, simultaneous independent work by Crichigno and Kohler \cite{crichigno2022clique} appeared on the arXiv. The authors prove that the problem \textbf{Homology} is QMA1-hard by investigating supersymmetric quantum systems. This result strengthens Theorem 2 stated here and yields an alternative proof of Theorem 1. }
\section{Topological Data Analysis and the LGZ-Algorithm}

Topological data analysis is a recent approach to the analysis of large datasets that are high-dimensional, noisy or incomplete. The goal of TDA is to describe the shape of a dataset by extracting robust features - 
topological invariants - which are inherently insensitive to local noise. These topological invariants are the Betti numbers $\beta_k$, which for every $k$ count the number of $k$-dimensional holes in the dataset. While the theory behind TDA is based on algebraic topology, it can be understood through simple linear-algebraic terms. This section gives a concise description of the TDA pipeline and its two main procedures: Representing a dataset as a simplicical complex and subsequently computing its shape by estimating the Betti numbers of the complex.
\subsection{Topological Data Analysis}
\begin{figure*}
    \centering
    \includegraphics[width=2\columnwidth]{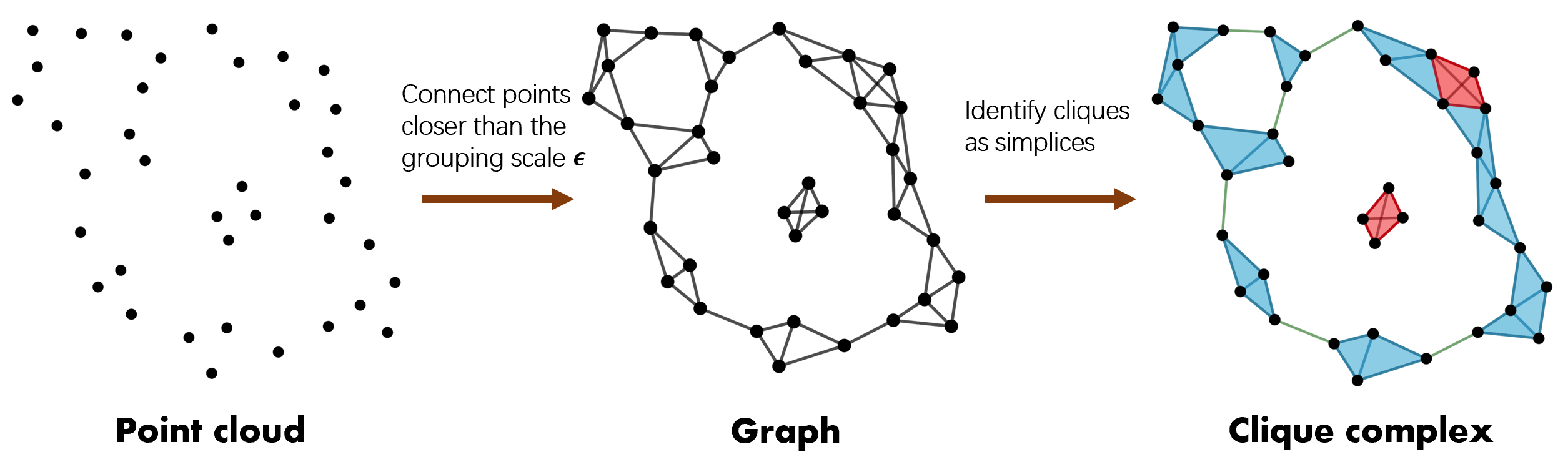}
     \caption{The construction of a simplicial complex from a point cloud. First, a grouping scale $\epsilon$ is chosen and every two data points with pairwise distance smaller than $\epsilon$ are connected. The resulting graph is subsequently promoted to a (simplicial) clique complex by identifying $k+1$-cliques with $k$-simplices.}
    \label{fig:pipeline}
\end{figure*}
The starting point of TDA is as follows: A dataset of interest is represented by a set of $n$ points $\{x_i\}_{i=1}^n$ (a point cloud) embedded in a metric space. For simplicity, we will study the example where $x_i$ are embedded in the real vector space $\mathbb{R}^d$ for some positive integer $d>0$, but the formalism extends to arbitrary manifolds. A point cloud is not connected and consequently has a trivial topology. To study the shape of the data set, the point cloud is thus first promoted to a topological object - a so-called simplicial complex - which can be interpreted as the generalization of a graph. This is done in two steps. First, a \textit{grouping scale} $\epsilon>0$ is chosen and every two points $x_i, x_j$ for which \begin{equation}
||x_i-x_j|| < \epsilon
\end{equation} are connected. The point cloud and resulting graph $G$ are depicted in Figure \ref{fig:pipeline}. 
Because a graph is an object with at most one-dimensional building blocks (edges), it does not yet allow the extraction of higher dimensional features. To remedy this, the graph $G$ is promoted to a simplicial clique complex by identifying $(k+1)$-cliques in $G$ with \emph{$k$-simplices.} A $k$-simplex $s_k$ of $G$ is thus a fully connected $k$-dimensional subgraph of $G$, i.e. a set \footnote{In algebraic topology, a simplex is defined as the linear span of an \emph{ordered} set. The ordering will be fixed later on by an ordering of the qubits encoding the simplices into quantum states. The particular ordering chosen does not impact topological invariants such as Betti numbers. We will thus omit it in this preliminary introduction to TDA, which also facilitates the description of our proofs in sections \ref{sec:counthard} and \ref{sec:approxhard}.} of $k+1$ points $\{v_0,...,v_k \} \subset \{x_i\}_{i=1}^n$ that are pairwise connected to each other. By this definition, a 0-simplex is a point, a 1-simplex a line, a 2-simplex a triangle, a 3-simplex a tetrahedron and so on. Simplices are visualized in Figure \ref{fig:boundary}. The set of all possible simplices that can be constructed on $G$ is called the \textit{simplicial complex} $S$ of the dataset. This particular type of simplicial complex -- constructed by identifying cliques of $G$ with simplices -- is also referred to as \emph{clique complex} of $G$, denoted $Cl(G)$. Furthermore, the clique complex of a graph obtained by connecting pairwise close points embedded in a metric space is called \emph{Vietoris-Rips} complex.

\noindent Note that $S$ is closed under inclusion: If a $k$-simplex $s_k = \{v_0,...,v_k\}$ is in the simplicial complex, the $(k-1)$-simplex $s_{k-1}(j)$ obtained by omitting the vertex $v_j$ from $s_k$ is also an element of the complex. These are the boundary simplices of $s_k$. Concretely, a line is bounded by two points, a triangle by three lines, and so on. 

In anticipation of an quantum algorithm for TDA, and to avoid repetition at a later stage, we will write simplices as quantum states of a $n$-qubit system. Because each $k$-simplex $s_k=\{v_0,...,v_k \}$ defines a subset of $\{x_i\}_{i=1}^n$, it can be naturally identified with a computational basis state $\ket{s_k}$ of Hamming weight $k+1$ on $n$ qubits, where the $j$-th bit of $\ket{s_k}$ is one if $s_k$ contains the vertex $x_j$, and zero else. If we denote by $\hal_k$ the Hilbert space spanned by the computational basis states of Hamming weight $k+1$, we have $\ket{s_k} \in \hal_k$ for all $k$ and $\oplus_{k=-1}^{n-1} \hal_k = \mathbb{C}^{2n}.$ It will become convenient later to further define the set $S_k$ of all $k$-simplices in the simplicial complex $S$, and the Hilbert space $\hal_k(S) \subset \hal_k$ spanned by $\{\ket{s_k} |\ s_k \in S_k\}$.
\subsection{Simplicial Homology} Having constructed the simplicial complex $S$ as a representation of the dataset $\{x_i\}_{i=1}^n$, TDA aims to describe the shape of $S$. From a topological standpoint, this shape is described by the \textit{Betti-numbers} $\beta_0,...\beta_n$ of $S$, which intuitively count the number of $k$-dimensional holes in the simplicial complex. We will first state their technical definition: Betti numbers are defined as the ranks of the homology groups induced by the boundary map $\del_k : \hal_k(S) \to \hal_{k-1}(S)$, which is a combinatorial operator
\begin{equation}
    \del_k\ket{s_k} = \sum_{i=0}^{k}(-1)^i\ket{s_{k-1}(i)},
\end{equation} mapping each $k$-simplex to the alternating sum of its boundary simplices. The action of $\del$ is visualized in Figure \ref{fig:boundary}.
\begin{figure*}
\centering
\begin{subfigure}{.5\textwidth}
  \centering
    \includegraphics[height=5cm]{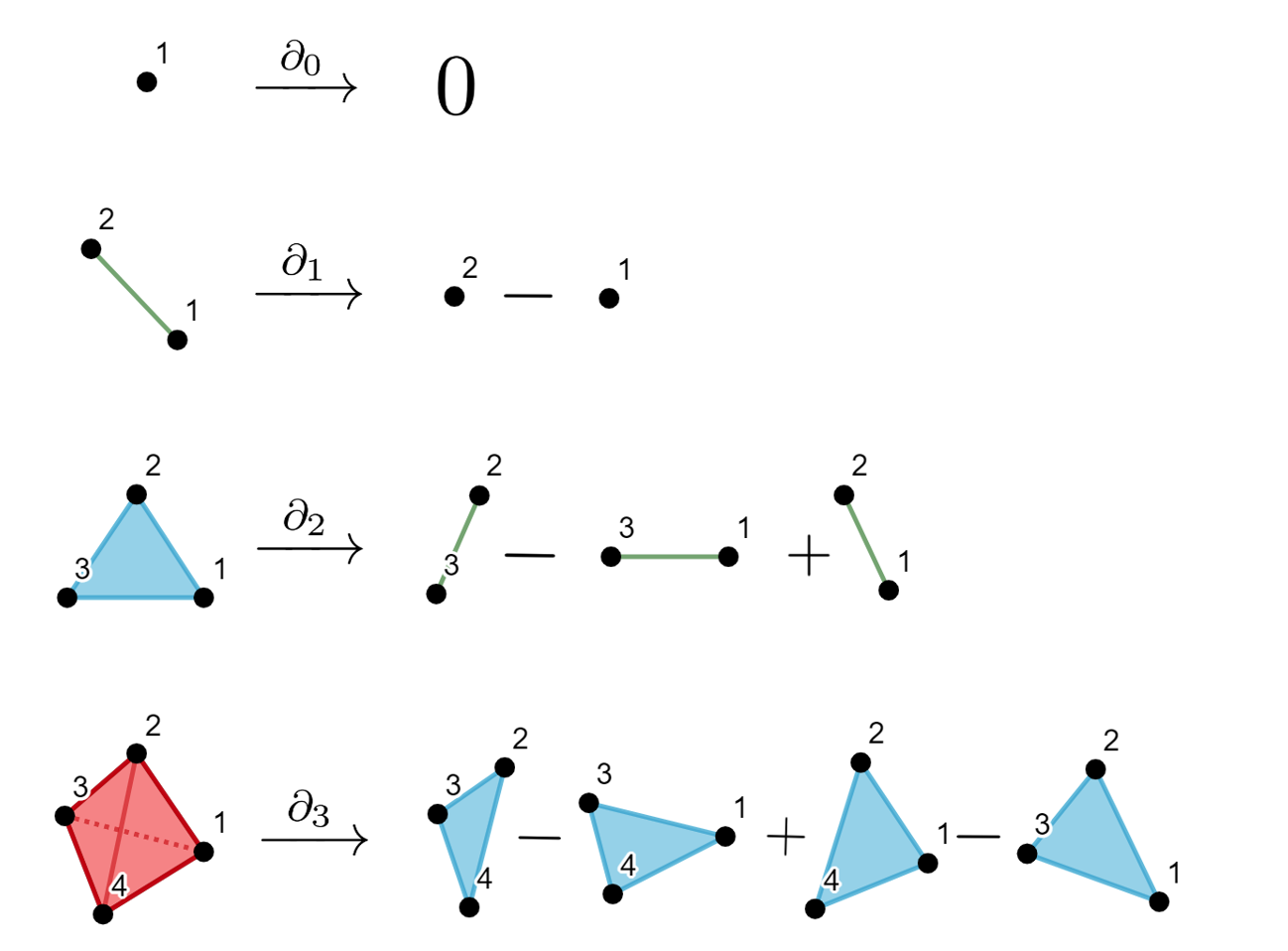}
    \caption{}
    \label{fig:boundary}
\end{subfigure}%
\begin{subfigure}{.5\textwidth}
  \centering
\includegraphics[height = 5cm]{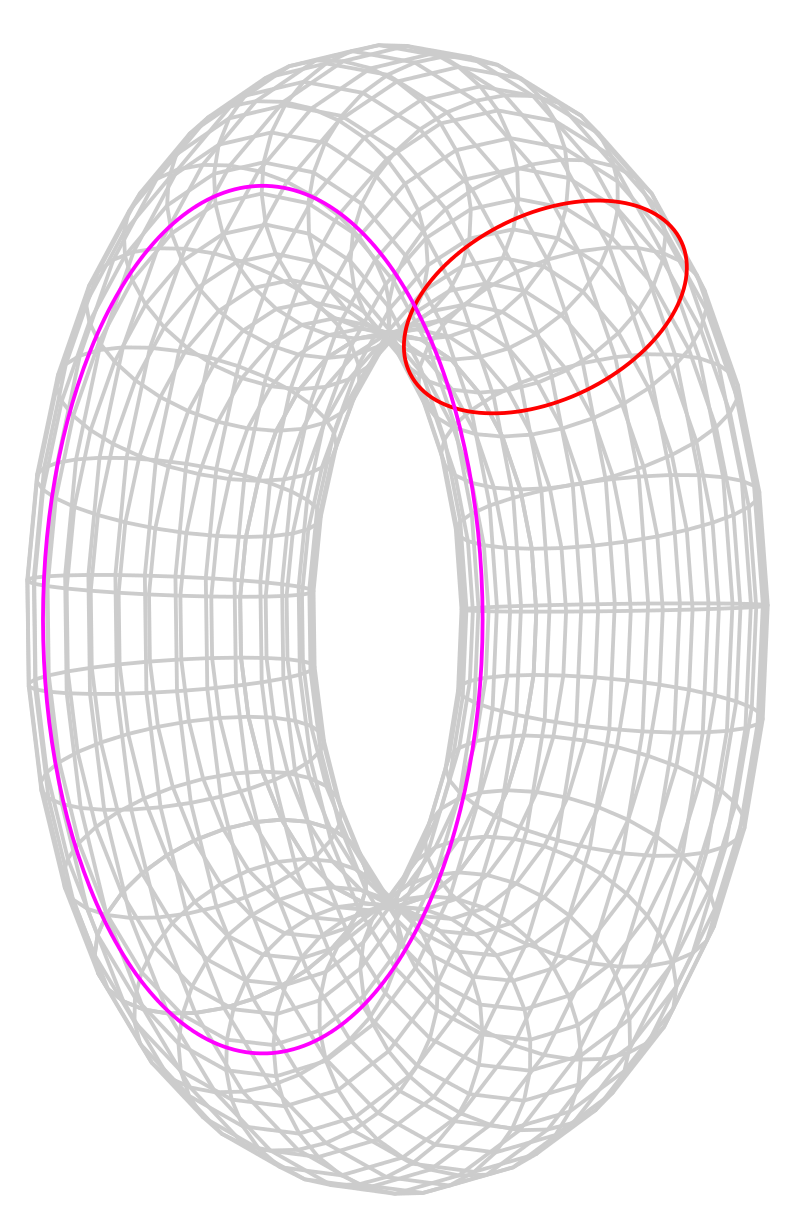}
    \caption{}
    \label{fig:torus}
\end{subfigure}
\caption{\textbf{(a)} A visualization of the simplicial boundary operator acting on a 0-simplex (point), 1-simplex (line), 2-simplex (triangle) and 3-simplex (tetrahedron). More generally, a $k$-simplex is mapped to the alternating sum of its $(k+1)$ boundary simplices. \textbf{(b)} A 2-dimensional torus embedded in $\mathbb{R}^3$. The torus has one connected component, two distinct holes (red and purple circle) and a single void (the interior). Its first three Betti numbers are thus $(\beta_0,\beta_1,\beta_2) = (1,2,1)$ and all higher orders vanish.}
\end{figure*}
As before, $\ket{s_{k-1}(i)}$ is the ($k-1$)-simplex obtained by setting the $i$-th entry of the bitstring $\ket{s_k}$ to zero. Note that the boundary of a boundary is zero,\begin{equation}
    \del_{k-1}\del_k = 0.
    \label{boundary}
\end{equation} Thus $\del_k$ induces a (finite) chain-complex on $S$, \begin{equation}
0 \xrightarrow{\hspace{2.5pt}\del_{n}\hspace{2.5pt} }  \hal_{n-1}(S) \xrightarrow{\del_{n-1}}\hdots  \xrightarrow{\hspace{4pt}\del_{1}\hspace{4pt} } \hal_0(S)  \xrightarrow{\hspace{3pt}\del_{0}\hspace{3pt} } 0,
\end{equation} with the corresponding $k$-th \textit{homology group} \textbf{$H_k$} defined as the quotient group of the kernel of $\del_k$ over the image of $\del_{k+1}$: \begin{equation}
    \textbf{$H_k$} = \bigslant{Ker $\del_k$}{Im $\del_{k+1}$}.
\end{equation} The $k$-th Betti number $\beta_k = \dim$ \homk \ equals the dimension of the $k$-th homology group \footnote{Because we have chosen to represent simplices as basis elements of a quantum mechanical Hilbert space, all homologies in this paper are computed over the field $\mathbb{C}$. In most applications of TDA, the data is however embedded in a \emph{real} metric space, hence one is interested in the homology over the reals $\mathbb{R}$. The simplicial Laplacian is real and symmetric in both cases, so this distinction does not matter.}.
Intuitively, the $k$-th Betti number counts the number of $k$-simplices in the complex that are boundaryless but not themselves a boundary, which can be interpreted as the number of $k$-dimensional holes. Thus $\beta_0$ counts the number of connected components, $\beta_1$ the number of holes, $\beta_2$ the number of voids, and so on. The Betti numbers of a torus are visualized in Figure \ref{fig:torus}. 
As for any chain complex that satisfies \eqref{boundary}, Hodge theory provides a relationship between the homology group $H_k$ and the \textit{Hodge Laplacian} $\Delta_k :=\del_{k}^*\del_{k} + \del_{k+1}\del_{k+1}^*$: 
\begin{equation}
    \homk \cong \text{Ker } \Delta_k = \text{Ker }(\del_{k}^*\del_{k} + \del_{k+1}\del_{k+1}^*).
    \label{Hodge}
\end{equation} In the special case presented here, this Hodge relation can be understood intuitively as follows: Both $\del_{k}^*\del_{k}$ and $\del_{k+1}\del_{k+1}^*$ are positive Hermitian operators, thus any state $\ket{\psi}$ annihilated by the Hodge Laplacian lies in the intersection of the kernels of $\del_k$ and $\del_{k+1}^*$. The first operator enforces that $\ket{\psi}$ has no boundary, and the second operator guarantees that only one representative per equivalence class is counted. Two cycles are considered equivalent if they can be continuously deformed to each other within the simplicial complex, i.e. if they differ by a boundary. Equation \eqref{Hodge} reduces the computation of $\beta_k$ to the linear-algebra problem of constructing and estimating the rank of a ${n\choose k+1}\times{n\choose k+1}$ matrix. In the next section, we describe a quantum algorithm for precisely this task, developed by Lloyd, Garnerone and Zanardi.
\subsection{The LGZ-Algorithm}
The LGZ quantum algorithm \cite{LloydAlgo} estimates the Betti numbers of a simplicial complex through repeated quantum phase estimation. Before providing a detailed description, we summarize its three main steps:
First, create the mixed density state
\begin{equation}
    \rho_k = \frac{1}{|{S_k}|} \sum_{s_k \in S_k} \ket{s_k}\bra{s_k}
\end{equation} of all $k$-simplices in the simplicial complex $S$. Here $|S_k|$ is the number of $k$-simplices of $S$. Secondly, exponentiate the Hodge Laplacian $\Delta$, \begin{equation}
  \Delta = \bigoplus_{k=0}^{n-1} \ \Delta_k = \begin{pmatrix}
\Delta_0 & 0 & \cdots& 0\\
0 & \Delta_1 & & \vdots \\
\vdots &  & \ddots & \vdots \\
0 & \cdots & \cdots & \Delta_{n-1}
\end{pmatrix}
\end{equation} and perform quantum phase estimation of $e^{i\Delta}$ with the eigenvector register starting in $\rho_k$. This yields random sampling from the eigenvalue distribution of $\Delta_k$. In a third step, steps one and two are repeated multiple times. The relative frequency of the zero-eigenvalue gives an estimate of the Kernel dimension of $\Delta_k$, and thus of the $k$-th Betti number: \begin{equation}
    \beta_k = \dim \homk = \dim \text{Ker } \Delta_k.
\end{equation}
In the following, each step is explained in detail. \\
\textbf{1. Projection onto $k$-simplices} \\
A variety of approaches can be used to arrive at the state $\rho_k$. LGZ construct it via Grover search applied to the membership function $f_k$, with \begin{equation}
    f_k(s) = \begin{cases} 
      1 & s \in S_k \\
      0 & \text{else} 
   \end{cases}.
\end{equation}
Given access to the pairwise distances between points $x_i$, $f_k$ can be implemented in \bigO$(k^2)$ steps. The multi-solution version of Grover's algorithm then allows us to construct \begin{equation}
    \ket{S_k} = \frac{1}{|{S_k}|^{1/2}} \sum_{s_k \in S_k} \ket{s_k},
\end{equation} while simultaneously revealing $|S_k|$, in \bigO$\left(\zeta_k^{-1/2}\right)$ calls to the function $f$. Here, \begin{equation}
\zeta_k = \frac{|S_k|}{{n \choose {k+1}}}    
\end{equation} is the fraction of occupied $k$-simplices of the simplicial complex $S$. The mixed state \begin{equation}
    \rho_k = \frac{1}{|{S_k}|} \sum_{s_k \in S_k} \ket{s_k}\bra{s_k}.
\end{equation} is then constructed by including an additional $n$-qubit ancilla system, applying a CNOT gate to each qubit in $\ket{S_k}$ and subsequently tracing out (ignoring) the ancilla. \\
\textbf{2. Quantum Phase Estimation} \\
The second step is to perform quantum phase estimation (QPE) with the operator $e^{i\Delta}$ on the state $\rho_k$. While directly exponentiating the Hodge Laplacian $\Delta$ might be costly in general, it is the square of a sparse operator $B$ defined as 
\begin{equation}
  B = \begin{pmatrix}
0 & \del_1 & 0 &\cdots& \cdots& 0\\
\del_1^\dagger & 0 & \del_2&  & & \vdots \\
0 & \del_2^\dagger & 0 & \ddots& &\vdots \\
 \vdots& & \ddots & \ddots & \ddots &\vdots \\
 \vdots& &  &\ddots & 0 & \del_{n-1} \\
0 &  \cdots& \cdots & \cdots& \del_{n-1}^\dagger & 0
\end{pmatrix}.\end{equation}
It is straightforward to verify that $B^*B = B^2 = \Delta$, and thus Ker $B =$ Ker $\Delta$. Because $B$ is the square root of a Laplacian, it is often referred to as \textit{Dirac operator}. Furthermore, because $\del_k$ is a ${n\choose k}\times{n\choose k+1}$ matrix with $n-k$ nonzero entries in each row and $k+1$ nonzero entries in each column, $B$ is $n$-sparse. Thus standard protocols allow the Hamiltonian simulation of $e^{iB}$ using \bigO$(n^3)$ gates. To get accurate results from QPE, the spectrum of $B$ has to be considered: Denote the smallest non-zero eigenvalue and the largest eigenvalue of $\Delta$ by $\lambda_{min}$ and $\lambda_{max}$, respectively. To avoid multiples of $2\pi$ in the exponent of $e^{iB}$, the Dirac operator has to be rescaled by $\lambda_{max}^{-1}$. Therefore the phase estimation has to be executed with precision at least $\kappa^{-1} = \frac{\lambda_{min}}{\lambda_{max}}$ to resolve whether a eigenvalue is zero or not, which yields a total costs of \bigO$(n^3 \kappa)$ for the QPE subroutine.  \\
\textbf{3. Kernel Estimation} \\
Repeating the above procedure $M$ times results in $M$ samples $\{\lambda_i\}_{i}$ from the eigenvalue distribution of $\Delta_k$. The relative frequency of the eigenvalue zero gives an additive estimate of the \emph{normalized} Betti number $c_k$,\begin{equation}
    c_k = \frac{\beta_k}{|S_k|} = \frac{\dim \homk}{|S_k|} \approx \frac{|\{i:\lambda_i = 0\}|}{M}.
\end{equation} 
The Hoeffding inequality shows that $M=\mathcal{O}(\delta^{-2})$ samples suffice to estimate $c_k$ up to additive error of $\varepsilon$. Combined with the state preparation and quantum phase estimation, the LGZ-algorithm computes an $\varepsilon$-additive approximation of the normalized $k$-th Betti number in time
\begin{equation}
\mathcal{O}\left(\frac{n^3\kappa  +n k^2\zeta_k^{-1/2}}{\varepsilon^{2}}\right).
\label{rt:norm}
\end{equation} 
This closely resembles the original runtime analysis of the LGZ-algorithm carried out in \cite{LloydAlgo}, however, there $\varepsilon$ was mistakenly labeled a multiplicative error. We will argue below that it is indeed preferable to state the runtime in terms of a multiplicative approximation error, as originally intended.

Note that this quantum algorithm not only estimates the $k$-th Betti number, but also supplies the corresponding eigenvectors which are the harmonic representatives of the homology \footnote{If the grouping scale $\epsilon$ varies reasonably slow during a filtration step, the harmonic representative of a persistent hole will not change by too much. Thus quantum phase estimation applied to the initial harmonic representative will have a high chance of projecting onto the new harmonic representative. This describes a quantum algorithm that can check whether a given hole persists.}. 
\subsection{Analysis}
The best known classical algorithm computes the $k$-th Betti number $\beta_k$ in time $\mathcal{O}\left(n \choose k+1\right)$, where $0\leq k \leq n-1$. At first glance, equation \eqref{rt:norm} thus seems to provide an exponential quantum advantage if $k$ scales with $n$. Two issues might reduce this advantage \cite{reviewBetti,quantumAdvantage}. The first concerns the fraction $\kappa= $$\frac{\lambda_{max}}{\lambda_{min}}.$ While $\lambda_{max}$ is bounded from above by the Gershgorin circle theorem \cite{reviewBetti}, there are no known lower bounds on the smallest eigenvalue $\lambda_{min}$ of the simplicial Laplacian $\Delta_k$. In some cases $\lambda_{min}$ is exponentially small, resulting in an exponential runtime of the LGZ-algorithm. Classes of simplicical Laplacians that have a minimum eigenvalue scaling at least inverse polynomially in the number of vertices have been identified in e.g. \cite{linearDepth,quantumAdvantage}. The second, more pressing issue concerns the fraction \begin{equation}
    \zeta_k = \frac{|S_k|}{{n \choose {k+1}}}.
\end{equation}  If the number of occupied $k$-simplices $|S_k|$ is small compared to ${n\choose k+1}$, the runtime of the quantum algorithm will again only provide a quadratic speedup over the best classical algorithm. A necessary condition for a polynomial runtime is thus that the simplicial complex is \emph{clique-dense} \cite{LloydAlgo,linearDepth,quantumAdvantage}, i.e. that  \begin{equation}
    \zeta_k^{-1} = {n\choose k+1}/ |S_k| \in \mathcal{O}(\text{poly}(n)).
    \label{cliqueDense}
\end{equation} Note that the restrictiveness of the requirement \eqref{cliqueDense} depends entirely on how $k$ scales with the number of vertices $n$. For constant $k$, every graph fulfills \eqref{cliqueDense} trivially. If $k$ grows slowly with $n$, such as $k = \Theta(\log(n)),$ the fraction $\zeta_k^{-1}$ will still not grow super-polynomially in the generic case, such as the random Erdös-Renyi model \cite{erdds1960random}. The regime where classical algorithms require exponential time, and  thus where an exponential quantum advantage is possible, is when $k$ grows asymptotically as $k = \Theta(n)$. This is also the regime where \eqref{cliqueDense} becomes restrictive, the regime originally considered in \cite{LloydAlgo}, and the regime where standard complexity-theoretic arguments apply. Throughout this paper, we will thus study the setting where $k = \Theta(n)$ scales linearly with $n$ (potentially up to logarithmic factors), unless specifically noted otherwise. In this case, the relative size of the clique-dense regime for random Vietoris-Rips complexes as defined in section \ref{sec:bettirandom} tends asymptotically towards 0.

Several classes of clique-dense complexes have been summarized in \cite{quantumAdvantage}. If both the clique-density requirement and the inverse-polynomial eigenvalue threshold are fulfilled, the algorithm can estimate the normalized Betti number up to inverse-polynomial additive error $\varepsilon$ in polynomial time. In this work, we give complexity-theoretic and analytical evidence that even in this optimal regime, the additive accuracy achievable by quantum algorithms for TDA is not sufficient to solve the problem \textbf{Homology}, e.g., to decide whether the Betti numbers are zero or non-zero. 

What practical information does an additive approximation to Betti numbers contain? We first note practical applications of TDA \cite{zomorodian2012topological,wasserman2018topological} require the actual Betti numbers $\beta_k$. More importantly, the requirement that the simplicial complex is clique-dense precisely means that the denominator of $\zeta_k$, $|S_k|$, scales exponentially with $n$. Unless the Betti number $\beta_k$ scales similarly, the normalized Betti number $c_k$ will be exponentially small. In that case, any inverse-polynomial additive error $\varepsilon$ cannot distinguish it from zero.

We will show in section \ref{sec:bettirandom} that Betti numbers typically do not scale exponentially in the number of vertices. Indeed, they are asymptotically almost always bounded by $\mathcal{O}(n)$, independent of $k$. The LGZ-algorithm thus requires exponential time to distinguish the Betti numbers from zero, which could be seen as a consequence of Theorem \ref{thm:Homology}. 

Because of the above discussion, we argue that it can be misleading to state the runtime in terms of the normalized \footnote{The computational complexity of computing normalized Betti numbers is an interesting topic by itself. It was investigated in \cite{quantumAdvantage}, which we briefly review in section 6.} Betti numbers $c_k$. Instead, a more insightful description of the algorithm is the runtime required to compute the actual Betti number $\beta_k$ up to multiplicative error $\delta$. This can be achieved naively by choosing $\varepsilon = \delta \frac{\beta_k}{|S_k|}$, resulting in \begin{equation}
\mathcal{O}\left(\frac{|S_k|^2}{\beta_k^2}\frac{(n^3\kappa  +nk^2\zeta_k^{-1/2})}{\delta^{2}}\right).\label{rt:naive}
\end{equation} A series of recent improvements to the original LGZ-algorithm, most notably by utilizing quantum counting \cite{LloydAlgo,reviewBetti} and an efficient construction of the boundary operator \cite{linearDepth} via fermionic annihilation operators, yield an algorithm that naturally outputs a multiplicative approximation with a substantially improved runtime of\begin{equation}
    %\mathcal{O}\left( \sqrt{\frac{|S_k|}{\beta_k}}\frac{(n \kappa  +n^2 \zeta_k^{-1/2})}{\delta}\right) = 
    \mathcal{O}\left( \frac{1}{\delta} \left(n^2\sqrt{\frac{{n\choose k+1}}{ \beta_k}} + n\kappa\sqrt{\frac{|S_k|}{\beta_k}}\right)\right).
    \label{rt:actual}
\end{equation} We note that further improvements to the complexity \eqref{rt:actual} might be possible by e.g. combining the results of \cite{linearDepth,logDepth,persistent,quantumAdvantage,reviewBetti,GoogleBettiBerry,AmazonBettiMcArdle}. However, it is to the best of our knowledge not possible to prepare the state $\rho_k$ significantly faster than $\Omega(\zeta_k^{-1/2})$ or estimate the Betti number $\beta_k$ given $\rho_k$ significantly faster than $\Omega\left(\sqrt{\frac{|S_k|}{\beta_k}}\right)$, hence the total runtime is lower bounded by \begin{equation}
    T_q = \Omega\left(\sqrt{\frac{{n\choose k+1}}{\beta_k}}\right).
\end{equation} While this might suggest a polynomial runtime for large Betti numbers, we will show in section \ref{sec:bettirandom} that this expression is exponential in $n$ for asymptotically almost all inputs, in line with Theorem \ref{thm:Betti} and Theorem \ref{thm:Homology}. We now prove both theorems.
\section{Computing Betti Numbers is \#P-hard}
\label{sec:counthard}
In this section, we prove the complexity-theoretic hardness of computing Betti numbers exactly. Let us recall the problem definition and theorem from the first section.

\begin{problem*}[\textbf{Betti}]
Given a clique complex $S$ defined by its vertices and edges and an integer $k\geq0$ as input, output the $k$-th Betti number $\beta_k$ of $S$.
\end{problem*}
\begingroup
\def\thetheorem{\ref{thm:Betti}}
\begin{theorem}
The problem \emph{\textbf{Betti}} is \emph{(a)} \#P-hard and \emph{(b)} remains \#P-hard when the input is restricted to clique-dense complexes.
\end{theorem}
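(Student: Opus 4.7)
The plan is to reduce from a known \#P-hard counting problem, a natural choice being \#$j$-CLIQUE (counting the number of cliques of size $j$ in a graph $H$) or \#IS (counting independent sets of a given size), both of which are classically \#P-hard. Given an instance $(H,j)$, I would construct a graph $G$ (and an integer $k$) in polynomial time such that $\beta_k(Cl(G))$ equals, or is related by an easily inverted linear transformation to, the number of $j$-cliques in $H$. The goal is a parsimonious-style reduction so that a single query to \textbf{Betti} solves the source instance.

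The gadget I have in mind exploits the fact that the category of clique complexes is closed under simplicial join: the graph join $G_1 + G_2$ satisfies $Cl(G_1 + G_2) = Cl(G_1) * Cl(G_2)$, and joins transform homology according to the Künneth-type formula
\begin{equation}
\tilde{H}_{n+1}(X * Y) \;\cong\; \bigoplus_{i+j=n} \tilde{H}_i(X) \otimes \tilde{H}_j(Y).
\end{equation}
Thus joining a base gadget with the clique complex of a long cycle $C_m$ (which realises $S^1$) shifts Betti numbers up by two dimensions, and joining with two disjoint vertices ($S^0$) realises suspension, all while remaining in the flag/clique-complex world. The idea is to encode each $j$-clique of $H$ as one independent generator of $\tilde{H}_k$ in a small sub-gadget (for instance by a clever subdivision of a simplex-with-boundary construction that survives flagification), then use the join to lift the count into a target dimension of our choice.

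For part (b), the clique-density requirement is enforced by a further join with a carefully chosen ``padding'' clique complex whose number of $k$-simplices is a polynomial fraction of $\binom{n}{k+1}$ at the chosen dimension $k = \Theta(n)$, and whose Betti numbers are known and either trivial or simple enough to be deconvolved from the Künneth formula. Since graph joins preserve the clique-complex structure automatically, no separate verification of flagness is needed, and the resulting density of $k$-simplices can be computed explicitly in terms of the face numbers of the two factors.

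The main obstacle is the first step above: designing a flag-complex gadget whose $k$-th Betti number exactly \emph{counts} the combinatorial objects rather than merely detects their existence. Because a clique complex is determined by its $1$-skeleton, any pair of vertices that one declares adjacent immediately fills in every higher-dimensional simplex on any mutually adjacent set, which tends to collapse intended non-trivial cycles into boundaries. Ensuring parsimony — one generator per clique of $H$ with no spurious higher-dimensional fill-in — while simultaneously satisfying the clique-density constraint in the regime $k = \Theta(n)$ is the core technical content of both parts of the theorem.
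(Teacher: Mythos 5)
Your proposal identifies the right target (a parsimonious reduction from a \#P-hard counting problem into a clique complex) and your auxiliary tools are sound — the identity $Cl(G_1+G_2)=Cl(G_1)*Cl(G_2)$ and the K\"unneth formula for joins are both correct. But the proof has a genuine gap, and you name it yourself: you never construct the flag-complex gadget whose $k$-th Betti number actually \emph{counts} the $j$-cliques of $H$, one independent homology generator per clique with no spurious fill-in. Everything else in your plan (dimension-shifting by joins, density padding) is scaffolding around this missing core, so as written there is no reduction. Moreover, your plan for part (b) is in tension with itself: to make the complex clique-dense at $k=\Theta(n)$ the padding factor must be very edge-dense, but dense flag complexes tend to be contractible (or nearly so), and joining with a complex whose reduced homology vanishes annihilates \emph{all} reduced homology of the product by the very K\"unneth formula you invoke. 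You would need a padding complex that is simultaneously clique-dense in the linear-$k$ regime and carries a controlled nontrivial homology class, which is itself a nontrivial construction you do not supply.

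The paper sidesteps the hard step entirely by not asking any single Betti number to count anything. It reduces \#SAT to computing the \emph{Euler characteristic} $\chi(S)=\sum_k(-1)^k|S_k|=\sum_k(-1)^k\beta_k$: an oracle for \textbf{Betti}, queried at every $k$, yields $\chi$ via the Euler--Poincar\'e formula, so it suffices to build a graph $G$ from a CNF formula $\mathcal{K}$ with $\chi(Cl(G))=1+(-1)^{n-1}\cdot(\text{number of satisfying assignments})$. This is achieved with vertices $t_i,f_i,p_i$ per variable and $c_j$ per clause, together with a pairing/partition argument showing that every clique not corresponding to a satisfying assignment cancels out of the alternating sum. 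Controlling an alternating sum of face counts is far easier than controlling individual homology groups, which is exactly the difficulty your approach runs into. For part (b), no padding is needed: the SAT gadget graph is already edge-dense enough that the clique-density theorem guarantees $\Omega(N^{k+1})$ many $k$-simplices for all $k\le n$. If you want to pursue your route of making a single fixed $\beta_k$ do the counting, note that the paper remarks this is possible (computing the $n$-th Betti number of their gadget is already \#P-hard), but it requires additional work beyond what you have sketched.
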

\addtocounter{theorem}{-1}
\endgroup
\noindent Informally, the complexity class NP is the set of decision problems for which a solution can be verified in polynomial time. The complexity class \#P is the set of counting problems associated to NP. That is, whereas a problem in NP asks whether a given instance has a solution or not, the corresponding problem in \#P asks how many solutions the instance has. The archetypal \#P-complete problem is \#SAT, the problem of counting the number of satisfying truth assignments to a given SAT instance (more details on SAT can be found after Theorem 3). Clearly, solving a problem in \#P is at least as hard as solving the corresponding problem in NP, as computing the number of solutions in particular determines whether this number is non-zero. On the other hand, there are problems where the decision version is easy, i.e. in P, but the counting version is \#P-complete. The most prominent such example is the matrix permanent \cite{valiant1979}. No known quantum algorithm solves an NP-complete problem in polynomial time and it is conjectured that NP $\not \subset $ BQP, i.e. that NP-hard problems are not accessible to quantum computers. The same holds for \#P-hard problems. It is worth noting that it is not known whether the NP-hardness of a problem implies the \#P-hardness of the corresponding counting problem \cite{livne2009note}.

The main idea of our proof of Theorem \ref{thm:Betti} is to relate the computation of Betti numbers to the computation of the number of maximal cliques, which is a known \#P-hard problem. To do so, we introduce the \emph{Euler characteristic} $\chi(S)$ of a simplicial complex $S$ defined on $n$ vertices as \begin{equation}
    \chi(S) = \sum_{k = 0}^{n-1} (-1)^k |S_k|,
\end{equation} where as before $|S_k|$ is the number of $k$-simplices in $S$. The index $k$ is bounded by $n-1$ because the largest possible simplices on $n$ vertices are $(n-1)$-simplices. The Euler characteristic $\chi$ allows us to 
relate the task of computing $\beta_k$ to the task of computing $|S_k|$ via the Euler-Poincaré formula \cite{EulerPoinc}:
\begin{theorem}[Euler-Poincaré]
The Euler characteristic of a simplicial complex $S$ has an alternative representation as $$\chi(S) = \sum_{k = 0}^{n-1} (-1)^k \beta_k,$$ where $\beta_k$ are the Betti numbers of $S$.
\end{theorem}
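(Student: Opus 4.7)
The plan is to derive the identity directly from the structure of the chain complex introduced in Section 2, by applying the rank--nullity theorem to each boundary map and observing that the resulting expression telescopes. Concretely, for each $k$ I would denote the space of cycles by $Z_k := \ker \partial_k \subseteq \hal_k(S)$ and the space of boundaries by $B_k := \mathrm{Im}\,\partial_{k+1} \subseteq \hal_k(S)$. By definition of the homology group, $\beta_k = \dim H_k = \dim Z_k - \dim B_k$.

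The key input is rank--nullity applied to $\partial_k : \hal_k(S) \to \hal_{k-1}(S)$, which yields
\begin{equation}
|S_k| \;=\; \dim \hal_k(S) \;=\; \dim \ker \partial_k + \dim \mathrm{Im}\,\partial_k \;=\; \dim Z_k + \dim B_{k-1}.
\end{equation}
Multiplying by $(-1)^k$ and summing from $k=0$ to $n-1$ gives
\begin{equation}
\chi(S) \;=\; \sum_{k=0}^{n-1}(-1)^k\,|S_k| \;=\; \sum_{k=0}^{n-1}(-1)^k \dim Z_k \;+\; \sum_{k=0}^{n-1}(-1)^k \dim B_{k-1}.
\end{equation}

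Next I would reindex the second sum with $j = k-1$, turning it into $-\sum_{j=-1}^{n-2}(-1)^{j}\dim B_j$. The boundary terms at the ends of the chain complex vanish: $B_{-1}=0$ because $\partial_0$ maps into the zero space, and $B_{n-1}=0$ because $\partial_n$ has zero domain. Hence the reindexed sum equals $-\sum_{k=0}^{n-1}(-1)^k \dim B_k$, and combining yields
\begin{equation}
\chi(S) \;=\; \sum_{k=0}^{n-1}(-1)^k \bigl(\dim Z_k - \dim B_k\bigr) \;=\; \sum_{k=0}^{n-1}(-1)^k \beta_k,
\end{equation}
which is the claimed identity.

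The argument is essentially a bookkeeping exercise and presents no serious obstacle; the only thing that needs care is the indexing of the telescoping sum and the verification that the endpoint terms $B_{-1}$ and $B_{n-1}$ really do vanish under the conventions $\partial_0 \equiv 0$ and $\partial_n \equiv 0$ fixed in Section 2. Since all chain groups here are finite-dimensional vector spaces over $\mathbb{C}$, no subtleties from torsion or infinite-dimensional subtleties arise, and the identity follows purely from linear algebra applied to a finite chain complex.
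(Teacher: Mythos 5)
Your proof is correct and follows essentially the same route as the paper's: rank--nullity applied to each $\partial_k$, the definition $\beta_k = \dim Z_k - \dim B_k$, and a telescoping alternating sum that closes because $\mathrm{Im}\,\partial_0 = \mathrm{Im}\,\partial_n = 0$. The paper merely packages the telescoping step by showing $\sum_k (-1)^k(|S_k|-\beta_k) = \sum_k (-1)^k(\dim \mathrm{Im}\,\partial_k + \dim \mathrm{Im}\,\partial_{k+1}) = 0$ directly, which is the same bookkeeping you carry out by reindexing.
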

\begin{proof}
By the Rank-nullity theorem, we have $|S_k| = \dim \text{Im } \del_k + \dim \text{Ker } \del_k$ and by the definition of the Betti number $\beta_k = \dim \hal_k =\dim \text{Ker } \del_k - \dim \text{Im } \del_{k+1}.$ Thus
\begin{align*}
    &\sum_{k= 0}^{n-1} (-1)^k(|S_k|-\beta_k)\\ &=\sum_{k = 0}^{n-1} (-1)^k (\dim \text{Im } \del_k + \dim \text{Im } \del_{k+1}) = 0,
\end{align*} where we have used $\text{Im } \del_0 =  \text{Im } \del_{n} = 0$. 
\end{proof}
The computation of the Euler characteristic thus reduces to the computation of the Betti number vector. In the following, we show that $\#$SAT, the problem of counting the number of solutions to a given SAT problem, reduces to the computation of the Euler characteristic of a Vietoris-Rips complex defined by its vertices and edges. Recall that a $SAT$ (or \emph{boolean satisfiability problem}) instance with variables $X_1,\dots,X_n$ and clauses $C_1,\dots,C_k$ in conjunctive normal form (CNF) is a boolean formula $C_1 \land C_2 \land \dots \land C_k$ where every clause $C_j$ is a disjunction of literals. A literal is either a variable $X_i$ or its negation $\neg X_i$.
The hardness of computing the Euler characteristic of an abstract simplicial complex defined by its inclusion-maximal simplices has already been established in \cite{complexEchar}. However, this proof does not directly apply here, because the complexes considered in \cite{complexEchar} are not clique complexes and a much more extensive description of the complex is given as input. Indeed, translating between both descriptions of a clique complex, i.e. computing the list of inclusion-maximal simplices from a list of edges and vertices, is by itself a \#P-hard problem \cite{Karp1972}. Our techniques are inspired by a proof of the hardness of computing independent sets \cite{complexEchar}, \cite{strongNP}. 
\begin{proof}[Proof of Theorem \ref{thm:Betti}(a)]
Let $\mathcal{K}$ be a $SAT$ instance in conjunctive normal form with variables $X_1, \dots,X_n$ and clauses $C_1, \dots , C_k$. We will construct a clique complex $S$ on $3n+k$ vertices such that the number of satisfying truth assignments of $\mathcal{K}$ equals $(-1)^{n}(1- \chi(S)).$

%We need to begin by introducing some terminology. Given a graph $G = (V,E)$ defined by its vertices and edges, define the complement graph $\widetilde{G}$ on the same vertex set $V$ as the graph for which any two vertices of $\widetilde{G}$ are connected if and only if they are disconnected in $G$. 
\begin{figure}
    \centering
    \includegraphics[width=0.7\linewidth]{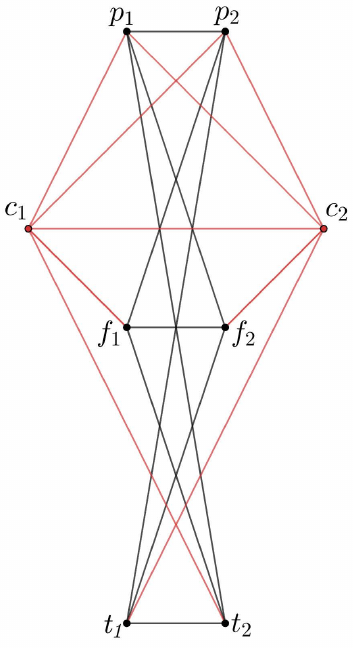}
    \caption{The graph $G$ corresponding to the $SAT$-instance $\mathcal{K} = (X_1 \lor \neg X_2) \land (\neg X_1 \lor X_2).$ The edges connected to the two clauses $c_1$ and $c_2$ are colored in red for clarity. The Vietoris-Rips complex on $G$ has 8 vertices, 18 edges, 10 triangles (2-simplices) and 1 tetrahedron (3-simplex). Its Euler characteristic is thus $\chi(S)=8-18+10-1=-1$ and the number of satisfying truth assignments $(-1)^2(1-\chi(S)) = 2$. }
    \label{fig:graph1}
\end{figure}
To construct $S$ we first build up the underlying graph $G$ as follows: For every variable $X_i$, construct 3 vertices $t_i, f_i$ and $p_i$. It will become clear later that $t_i$ and $f_i$ encode whether $X_i$ is \emph{true} or \emph{false}, and $p_i$ whether $X_i$ appears\footnote{If every literal $X_j$ appears in the $SAT$-instance, the vertices $p_j$ do not contribute to the Euler characteristic and they can be neglected, leading to a Graph of size $2n+k$.}  in $\mathcal{K}$. Connect every two points $v_i, v_j$ for which $i\not=j$. Here, $v \in \{t,f,p\}$ is a placeholder. Consequently, add an additional vertex $c_j$ for every clause $C_j$. Connect $c_j$ to every vertex \textbf{\emph{except}} to \begin{itemize}
\item $t_i$ if the clause $C_j$ contains the literal $X_i$.
\item $f_i$ if the clause $C_j$ contains the literal $\neg X_i$.
\end{itemize} In Figure \ref{fig:graph1} we illustrate the graph $G$ corresponding to the 2-$SAT$ instance $\mathcal{K} = (X_1 \lor \neg X_2) \land (\neg X_1 \lor X_2).$ The vertex set $V$ of $G$ is \begin{equation}
    V = \{t_i,f_i,p_i \ |\ 1 \leq i \leq n \} \cup \{c_j \ |\ 1 \leq j \leq k\}.
\end{equation}We can now build a clique complex $S=Cl(G)$ on $G$ in the usual way by identifying ($k+1$)-cliques with $k$-simplices. It remains to show that the Euler characteristic of $Cl(G)$ counts the number of solutions to $\mathcal{K}$. To do so, we introduce the notion of a \emph{maximal clique,} which is a clique $s \subset S$ that is not contained in any other clique of $S$. Maximal cliques should not be confused with maximum cliques, which are cliques of maximal cardinality. While a maximum clique is always maximal, the converse must not hold.
For the clique complex shown in Figure \ref{fig:graph1}, the maximal cliques are \begin{align*}
    &\{c_1,p_1,p_2,c_2\} \\ &\{c_1,p_1,t_2\},\{c_1,p_2,f_1\},\{c_1,f_1,t_2\},\\ &\{c_2,p_2,t_1\},\{c_2,p_1,f_2\}, \{c_2,f_2,f_1\} \\ &\{t_1,t_2\},\{f_1,f_2\}.
\end{align*}
Let $A$ be the set of all vertices labeled either $t_i$ or $f_i$ for any $i$ and let $B$ be the set of all vertices labeled either $c_i$ or $p_i$. By construction, $A\cup B = V$. Let further $\Gamma(S)$ be the set of all maximal cliques of $S$ that contain only vertices of $A$. In the above example, $\Gamma(S) = \{ \{t_1,t_2\},\{f_1,f_2\}\}$ We will show the following two results: \begin{itemize}
    \item[\textbf{i)}] The elements of $\Gamma(S)$ are in a one-to-one correspondence with the solutions of the $SAT$ instance $\mathcal{K}$.
    \item[\textbf{ii)}] All remaining cliques $s \in S\backslash\Gamma(S)$ contribute trivially to the Euler characteristic, that is $$ \chi(S\backslash \Gamma(S)) := \sum_{s \in S\backslash \Gamma(S)} (-1)^{|s|-1} = 1.$$
\end{itemize}
As a consequence of \textbf{i)} and \textbf{ii)}, the Euler characteristic of $S$ equals \begin{align}
  \label{interEuler}
    \chi(S) &= \sum_{s\in S}(-1)^{|s|-1} \\ &= (-1)^{n-1}(\textbf{number of solutions to }\mathcal{K}) + 1 \nonumber, 
\end{align} which proves Theorem \ref{thm:Betti}. It remains to show \textbf{i)} and \textbf{ii)}.

\noindent\textbf{Proof of i).} Let $s \in \Gamma(S)$. Suppose there exists a fixed $i$ such that $s$ contains neither $t_i$ nor $f_i$. Then $s$ can be extended to the strictly larger clique $s \cup \{p_i\}$, since $p_i$ is connected to all other $t_j,f_j$ for $j\not = i$. This contradicts that $s$ is maximal.
Likewise, $s$ can not contain both $t_i$ and $f_i$ for any fixed $i$, as they are not connected in $S$. Therefore, any $s\in \Gamma(S)$ has cardinality $n$ and corresponds to an assignment of the variables $X_1,\dots, X_n$. Consider now a clause $c_j$. By the maximality of $s$, there is at least one vertex of $s$ that is not connected to $c_j$ and therefore the truth assignment corresponding to $s$ fulfills the clause $c_j$. Since $c_j$ was arbitrary, $s$ satisfies $\mathcal{K}$. The same reasoning shows that any $z\in S$ corresponding to a satisfying truth assignment of $\mathcal{K}$ is maximal and contained in $\Gamma$.

\noindent\textbf{Proof of ii).} Let $\widetilde{L}\subset S$ be the set of all cliques of $S$ that only contain vertices of $A$ and consider $L = \widetilde{L}\backslash \Gamma(S)$. For each $v\in L$, let us define the set \begin{equation}
    D_v = \{k \in S \ | \ v \subset k \text{ and } k\backslash v \subset B\}
\end{equation} of all extensions of $v$ to $B$. Note that $\{D_v\}_{v\in L}$ together with $D_{\emptyset}=\{k\in S | k \subset B\}$ is a complete partition of $S\backslash\Gamma(S)$, that is, every $s \in S\backslash\Gamma(S)$ is contained in exactly one $D_v$. Let $v$ be a clique contained in $L$. Thus $v$ is not maximal and by \textbf{i)} we can find a $p_j \in B$ such that $v \cup \{p_j\}$ is in $D_v$. Since all elements of $B$ are pairwise connected, there is a bijection between the sets $E_v = \{k\in D_v \ | \ p_j \not \in k \}$ and $F_v = \{k\in D_v \ | \ p_j \in k \}$: \begin{equation}
    E_v \to F_v, \quad k \mapsto k\cup \{p_j\}.
\end{equation}
Because $E_v \cup F_v = D_v$ by definition, the Euler characteristic of $D_v$ is $\chi(D_v) = \chi(E_v)+\chi(F_v) = 0$. Using once again that all elements of $B$ are pairwise connected and the fact that the clique complex of a complete graph has Euler characteristic $1$, we get $\chi(D_\emptyset) = 1$ and 
\begin{equation}
    \sum_{s \in S\backslash \Gamma(S)} (-1)^{|s|-1} = \chi(D_\emptyset) + \sum_{v\in L} \chi(D_v)  = 1,
\end{equation}which proves \textbf{ii)}. Because the elements of $\Gamma(S)$ have cardinality $n$, they contribute with a factor $(-1)^{n-1}$ to $\chi(S)$. This leads to expression \eqref{interEuler} and concludes the proof.
\end{proof}
Before proving Theorem \ref{thm:Betti}(b), we mention two alternative approaches to arrive at the \#P-hardness of computing Betti numbers exactly. First, note that the only cliques contributing to the Euler characteristic of the above graph $G$ are the cliques of size $n$. Building upon this idea, one can show that computing the $n$-th Betti number of $Cl(G)$ is already \#P-hard, as opposed to computing all Betti numbers. Because we do not require this stronger statement, we omit the details. Second, one could also arrive at Theorem \ref{thm:Betti}(a) and \ref{thm:Betti}(b) building upon a result  \cite{crichigno2020supersymmetry} that computing the Witten index of a certain local supersymmetric system is \#P-hard. This is because the Witten index can be related to the independence polynomial (which is directly related to the Euler characteristic) of a clique complex via the so-called Fermionic hard-core model. %For this model, the locality constraint in \cite{crichigno2020supersymmetry} implies clique-density \cite{cade2021complexity}. 

We can strengthen Theorem \ref{thm:Betti}(a) by restricting the class of simplicial complexes allowed as input to the problem \textbf{Betti}. As argued in section 2, the LGZ quantum algorithm works best in the regime of clique-dense complexes. We now show that computing Betti numbers remains hard even in this regime. 
\begin{proof}[Proof of Theorem \ref{thm:Betti}(b)]
In the proof of Theorem \ref{thm:Betti}(a) we reduced an arbitrary \#$SAT$ instance to the computation of the Euler characteristic of the clique complex of a corresponding graph $G$. We will show that $Cl(G)$ is already clique-dense, which evidently proves Theorem \ref{thm:Betti}(b). Recall that for an instance $\mathcal{K}$ with $n$ variables and $s$ clauses, $G$ is defined on $N = 3n+s$ vertices. If $\mathcal{K}$ belongs to $\ell$-$SAT$, i.e. every clause $C \in \mathcal{K}$ has $\ell$ literals, the number of edges of $G$ is \begin{equation}
        |E(G)| = \frac{9}{2} n (n-1) + 3ns - ls + \frac{1}{2}s(s-1). 
    \end{equation}
Because $SAT$ reduces to $3$-$SAT$, we can fix $\ell = 3$. Moreover, as $3$-$SAT$ remains NP-hard \footnote{Technically, we not only require that the problem remains NP-hard but that the corresponding counting problem remains \#P-hard. However, these reductions are parsimonious and preserve the number of solutions. Thus they also establish \#P-hardness of the counting problem.} when restricted to instances where every variable appears 4 times \cite{tovey1984simplified,berman2007computational}, we can fix $s = \frac{4}{3}n$. The fraction $\gamma$ of connected edges over the number of vertices squared is then \begin{align}
    \gamma &= \frac{|E(G)|}{N^2} = \frac{1}{2}-\frac{9n+7s}{2N^2} \nonumber\\&= \frac{1}{2}\left(1-\frac{165}{169n}\right) > \frac{1}{2}\left(1-\frac{1}{n}\right).
\end{align} The clique-density theorem \cite{cliqueDense} states that the number of $k$-simplices in $G$ is lower bounded by $\Omega(N^{k+1})$ if \begin{equation}
    \gamma > \frac{1}{2}\left(1-\frac{1}{k}\right).
\end{equation}
Thus the clique-density condition \eqref{cliqueDense} is fulfilled for any $k \leq n$, in particular for $k = \Theta(n)$, and $Cl(G)$ is clique dense.
\end{proof}
This concludes the proof of Theorem \ref{thm:Betti}. We emphasize that it is not clear whether either  \textbf{Betti} or \textbf{Homology} are in \#P or NP, respectively, which would make them \#P- or NP-complete. The non-triviality of the $k$-th Betti number could be witnessed by a cycle that is not a boundary, but such a cycle may be exponentially large in the size of the input. %In fact, a recent result \cite{crichigno2022clique} implies that \textbf{Betti} is not in NP unless QMA1$ \subset $ NP.
\section{Approximating Betti Numbers is NP-hard}
\label{sec:approxhard}
In the previous section, we established that computing Betti numbers exactly is \#P-hard. However, the LGZ-algorithm only approximates Betti numbers. To be able to argue about the runtime of quantum algorithms for TDA, we need to determine the complexity of computing a multiplicative approximation to $\beta_k$. This motivates us to define the following problem, which we recall along with Theorem \ref{thm:Homology} from section 1.
\begin{problem*}[\textbf{Homology}]
Given a clique complex $S$ defined by its vertices and edges and an integer $k\geq0$ as input, output \emph{true} if $\beta_k > 0$ and \emph{false} if $\beta_k = 0.$
\end{problem*}
\begingroup
\def\thetheorem{\ref{thm:Homology}}
\begin{theorem}
The problem \emph{\textbf{Homology}} is \emph{(a)} NP-hard and \emph{(b)} remains NP-hard when the input is restricted to clique-dense complexes.
\end{theorem}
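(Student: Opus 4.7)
The plan is to reduce 3-SAT to \textbf{Homology} by reusing, essentially verbatim, the gadget construction from the proof of Theorem 1. Given a 3-SAT instance $\mathcal{K}$ with $n$ variables and $s$ clauses, let $Cl(G)$ be the clique complex on $N = 3n+s$ vertices built in the proof of Theorem 1. I claim that $\beta_{n-1}(Cl(G))$ equals $|\Gamma(S)|$, the number of satisfying assignments of $\mathcal{K}$, while $\beta_k(Cl(G)) = 0$ for $0 < k < n-1$. Part (a) follows immediately: deciding whether the $(n-1)$-th Betti number is positive decides 3-SAT satisfiability, which is NP-hard. Because the graph $G$ is identical to the one analyzed in the proof of Theorem 1(b), the complex is clique-dense for every $k \leq n$, including the relevant index $k = n-1 = \Theta(N)$, which will give part (b).

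The homological claim will be established by a discrete Morse argument on the partition $S = \Gamma(S) \sqcup D_\emptyset \sqcup \bigsqcup_{v\in L} D_v$ already introduced in the proof of Theorem 1. For each $v\in L$, let $i(v)$ be the smallest index for which $v$ contains neither $t_{i(v)}$ nor $f_{i(v)}$; the analysis of Theorem 1(i) guarantees that such an index exists and that $p_{i(v)}$ is connected to every vertex of $v$. The pairing $k \leftrightarrow k \cup \{p_{i(v)}\}$ inside $D_v$ is then a free-face collapse leaving no critical simplex. The block $D_\emptyset$ is the full simplex on the complete subgraph $B$ and hence collapses to a single critical $0$-cell. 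Finally, the same analysis of Theorem 1(i) shows that no vertex in $B$ can extend a satisfying-assignment clique, so $D_v = \{v\}$ for every $v\in \Gamma(S)$, producing $|\Gamma(S)|$ critical $(n-1)$-simplices. Once the global matching is verified to be acyclic, the fundamental theorem of discrete Morse theory yields that $Cl(G)$ is homotopy equivalent to a wedge of $|\Gamma(S)|$ spheres of dimension $n-1$, from which the claimed Betti numbers follow.

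For part (b), the clique-density condition \eqref{cliqueDense} at $k=n-1$ reads $\binom{N}{n}/|S_{n-1}| \in \mathcal{O}(\mathrm{poly}(N))$, which is immediate from the Theorem 1(b) application of the clique-density theorem giving $|S_k| \in \Omega(N^{k+1})$ for all $k \leq n$. With the normalization $s = \tfrac{4}{3}n$ inherited from \cite{tovey1984simplified,berman2007computational}, we have $n = \Theta(N)$, so the reduction produces instances precisely in the $k = \Theta(N)$ regime targeted by the LGZ-algorithm.

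The step I expect to require the most care is verifying acyclicity of the global discrete Morse matching, since collapses in distinct blocks $D_v$ could in principle combine into a cyclic modification path through shared $p_j$-faces. I plan to enforce acyclicity by processing the blocks in increasing order of the pivot index $i(v)$ and using a fixed total order on the $p$-vertices, so that any inter-block modification arrow moves monotonically. If the Morse bookkeeping turns out to be delicate, a clean alternative is to compute $H_*(Cl(G))$ directly via the nerve theorem applied to the cover of $Cl(G)$ by the closed stars of the vertices in $B$, which reduces the homology calculation to a purely combinatorial statement that can be read off from the 3-SAT construction.
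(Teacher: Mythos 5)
Your route is genuinely different from the paper's. The paper obtains (a) by citing Adamaszek and Stacho's NP-hardness result for co-chordal graphs, and proves (b) by transporting that instance through Alexander duality and the suspension isomorphism to the clique complex of a dense co-bipartite graph $\bar H$, then invoking the clique-density theorem. You instead upgrade the Theorem~\ref{thm:Betti} gadget from an Euler-characteristic identity to a homotopy-type statement via discrete Morse theory, concluding $Cl(G)\simeq \bigvee_{|\Gamma(S)|} S^{n-1}$, which yields (a) and (b) simultaneously, is self-contained, and in fact realizes the strengthening the paper only alludes to after Theorem~\ref{thm:Betti}(a) (that the hardness concentrates in the top Betti number). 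The conclusion is consistent with the paper's computation $\chi(S)=1+(-1)^{n-1}\cdot(\#\text{solutions})$, and the clique-density check at $k=n-1=\Theta(N)$ is exactly the one already done in the proof of Theorem~\ref{thm:Betti}(b). This is a viable and arguably cleaner argument, but two points need repair before it is a proof.

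First, a concrete error: the pivot $i(v)$ you define need not exist. An element $v\in L$ can contain exactly one of $t_i,f_i$ for \emph{every} $i$ — namely a clique encoding a complete but non-satisfying assignment — and such a $v$ is non-maximal only because some violated-clause vertex $c_j$ is adjacent to all of $v$ (no $p_i$ can extend it, since $p_i$ is disconnected from both $t_i$ and $f_i$). Your matching must therefore allow clause vertices as pivots, exactly as in the paper's proof of \textbf{ii)}, where the extending vertex is merely some element of $B$. Second, the acyclicity bookkeeping: your plan of ordering blocks by $p$-vertex pivots is both unnecessary and, after the above fix, no longer well-posed. The standard tool here is the cluster (patchwork) lemma: the map $s\mapsto s\cap A$ is an order-preserving map from the face poset of $S$ to the poset of cliques on $A$ (including $\emptyset$), and its fibers are precisely $D_\emptyset$, the blocks $D_v$ for $v\in L$, and the singletons $\{v\}$ for $v\in\Gamma(S)$. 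Each fiber matching toggles a single fixed vertex of $B$ and is therefore acyclic (an alternating step $\sigma\lessdot\sigma\cup\{b\}\gtrdot\sigma'$ with $\sigma'\neq\sigma$ forces $b\in\sigma'$, so $\sigma'$ is not matched upward), and any alternating cycle has weakly decreasing, hence constant, image in the base poset, so it would have to live in a single fiber. With these two repairs the argument goes through.
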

\addtocounter{theorem}{-1}
\endgroup
\noindent Theorem \ref{thm:Homology}(a) is due to Adamaszek and Stacho \cite{adamaszek2016}, who proved that the decision problem \textbf{Homology} is NP-hard for the clique complexes of co-chordal graphs. Since the class of all graphs in particular includes the class of co-chordal graphs, the theorem follows. This by itself however is not enough to reason about the runtime of the LGZ-algorithm, as co-chordal graphs tend to be sparse, whereas it has already been established that a necessary requirement for a polynomial runtime of the LGZ-algorithm is that the input is clique-dense. To strengthen the results of Adamaszek and Stacho, we show that the problem \textbf{Homology} retains its hardness when restricted to the class of clique-dense complexes.
\begin{proof}[Proof of Theorem \ref{thm:Homology}(b)]
Let $G = (V,E)$ be any co-chordal graph with vertex set $V=\{x_1,\dots,x_n\}$ and edges $E$. Building upon constructions in \cite{adamaszek2016,StarCluster}, we will relate the homology of $Cl(G)$ to the homology of the complement $\bar{H}$ of a bipartite graph $H$ defined via the Alexander-dual of $G$. We will then show that $Cl(\bar{H})$ is clique-dense, which reduces the homology problem of co-chordal graphs to the homology problem of clique-dense complexes and proves the theorem.

Denote by $Cl(G)$ the clique complex of the co-chordal graph $G$. Consider the complex $A_G$ defined on the same vertex set $V$ consisting of simplices $s$ with \begin{equation}
s\in A_G \iff (V\backslash s) \not \in Cl(G).
\end{equation} $A_G$ is called the Alexander-dual of $G$ and consists of the complements of all non-cliques in $Cl(G).$ The homologies of $A_G$ and $Cl(G)$ are related via the Alexander duality \cite{AlexanderDual} \begin{equation}
   H_{k}(Cl(G))  = H^k(Cl(G)) = H_{n-k-3}(A_G),
   \label{firsthom}
\end{equation}where the first equality holds due to Corollary 2.2 in \cite{adamaszek2016}. $A_G$ is not a clique complex and thus not the type of complex that appears in TDA. However, we can relate the homology of $A_G$ to the homology of the clique complex of a bipartite graph $H$, as follows. Denote by $\mathcal{M}(A_G)$ the set of inclusion-maximal faces of $A_G$. By construction, \begin{equation}
    \mathcal{M}(A_G) = \{V \backslash e \ |\  e \in V\times V \text{ and } e \not \in E\}
    \label{maximalFaces}
\end{equation}is the complement of all edges $e$ that are not in the edge set of $G$, i.e. that are edges of $\bar{G}$. In particular, $\mathcal{M}(A_G)$ can be computed in polynomial time given $G$. Let us now construct a bipartite graph $H$ with two parts of size $n$ and $m$ respectively, where $m= |\mathcal{M}(A_G)|$. Each vertex in the first part corresponds to one of the vertices $x_i$ of $G$, and each vertex in the second part corresponds to the inclusion-maximal faces $F_j \in \mathcal{M}(A_G)$ of the Alexander dual of $G$. Connect every pair $(x_i,F_j)$ for which $x_i \not \in F_j$. By equation \eqref{maximalFaces}, each $F_j$ is only connected to exactly two vertices. $H$ is thus a graph on $n+m$ vertices with $2m$ edges and clearly sparse. By construction, $Cl(\bar{H})$ is isomorphic to the suspension of $A_G$ and the suspension isomorphism \cite{adamaszek2016,StarCluster,kozlov2008combinatorial} relates the homology of both complexes via \begin{equation}
    H_k(A_G) = H_{k+1}(Cl(\bar{H})),
    \label{sechom}
\end{equation}where the complement $\bar{H}$ is now dense. Combining equation \eqref{firsthom} and \eqref{sechom} yields \begin{equation}
    H_k(Cl(G)) = H_{n-k-2}(Cl(\bar{H})),
    \label{homequiv}
\end{equation} which reduces the homology problem of co-chordal graphs $G$ to the homology problem of dense co-bipartite graphs $\bar{H}$. For equation \eqref{homequiv} to be well-defined, we have to pick $0 \leq k\leq n-2$, which does not matter for the reduction as the last Betti number $\beta_{n-1}$ can be computed in linear time.

To conclude the proof, we show that $Cl(\bar{H})$ obtained via this construction is clique-dense.
Let $N = n+m$ be the number of vertices of $\bar{H}$.  The fraction $\gamma$ of connected
edges over the number of vertices squared is \begin{align}
    \gamma &= \frac{|E(\bar{H})|}{N^2} = \frac{N(N-1) - 4m}{2N^2} \nonumber\\&= \frac{1}{2}\left(1-\frac{5N-4n}{N^2}\right) > \frac{1}{2}\left(1-\frac{5}{N}\right).
\end{align}
By the clique-density theorem \cite{cliqueDense}, $Cl(\bar H)$ contains $\Omega(N^{k+1})$ many $k$-simplices if \begin{equation}
\gamma > \frac{1}{2}\left(1-\frac{1}{k}\right).
\label{cliqueEq}
\end{equation} Hence $\bar{H}$ fulfills the requirement \eqref{cliqueDense} for any $k\leq N/5$ and in particular, since $n\leq N$, for $k = \Theta(n).$ The clique complex $Cl(\bar{H})$ is thus clique-dense.
For completeness, we note that one could similarly show that \eqref{cliqueDense} is fulfilled for any $k\leq n$ by utilizing that the MaxClique problem required to establish NP-hardness for co-chordal graphs \cite{adamaszek2016} remains hard for graphs with a constant fraction of edges, in which case $m = \Theta(n^2)$ and \begin{equation}
    \frac{5N-4n}{N^2} = o\left(\frac{1}{n}\right).
\end{equation} Having reduced the homology problem of the clique complexes of co-chordal graphs to the homology problem of clique-dense clique complexes, this concludes the proof.
\end{proof}

Theorems \ref{thm:Betti} and \ref{thm:Homology} together describe the computational hardness of topological data analysis. It is widely believed that quantum computers cannot solve NP-hard or \#P-hard problems in subexponential time. Our results therefore imply that any quantum algorithm for TDA, including the LGZ-algorithm, runs in exponential time in the worst case. Moreover, our proofs show that even in the optimal clique-dense regime, quantum algorithms for TDA require exponential time. Because the LGZ-algorithm performs optimally in the clique-dense regime, and the relative size of this regime tends towards zero, this further implies that the LGZ-algorithm runs in exponential time for asymptotically almost all inputs. The same holds for any quantum algorithm for TDA that uses similar techniques (Grover search) as LGZ to arrive at the mixed state $\rho_k$, which are only efficient in the clique-dense case. 

%To strengthen this claim and provide a more formal justification, we will prove a worst-case to average-case reduction \cite{feigenbaum1993random} in a follow-up paper. A well-known problem for which the average case is as hard as the worst case is the matrix permanent \cite{Lipton1989NewDI}.

\section{Time Complexity of the LGZ-Algorithm}
\label{sec:bettirandom}
Having established the complexity-theoretic hardness of topological data analysis, we turn back to the original LGZ-algorithm. We consider here the version of the LGZ-algorithm that outputs a multiplicative approximation to $\beta_k$, as originally intended \cite{LloydAlgo} and required by practical applications. Our previous results imply that the algorithm will have a runtime exponential in the input size for asymptotically almost all inputs. In this section, we verify this directly by considering random Vietoris-Rips complexes.

Disregarding other details - like potentially exponentially small eigenvalue gaps of the simplicial Laplacian - the optimal runtime of the LGZ-algorithm as derived in equation \eqref{rt:actual} will be of order $\text{poly}(n) \xi$, where \begin{equation}
    \xi = \sqrt{\frac{{n\choose k+1}}{ {\beta_k}}}.
\end{equation} Recall that the best known classical algorithm for computing $\beta_k$ takes time $\mathcal{O}\left({n\choose k+1}\right)$. The LGZ-algorithm will thus achieve a super-quadratic speedup only if the $k$-th Betti number is large. However, as we will discuss in the following, Betti numbers of Vietoris-Rips complexes are small in expectation and variance.

Consider a set of points $\{x_i\}_{i=1}^n$ randomly distributed i.i.d. in a metric space of dimension $d$ according to any bounded probability density function. Let us denote by $V^\epsilon$ the simplicial complex that results from the TDA procedure outlined in section 2. That is, we choose a grouping scale $\epsilon > 0$, connect all $(x_i,x_j)$ that are closer than $\epsilon$ and identify cliques in the corresponding graph with simplices. The resulting simplicial complex is called a Vietoris-Rips complex, and together with the assumption that $\{x_i\}_{i=1}^n$ were distributed randomly, a \emph{random Vietoris-Rips} complex. Note that this is the most natural input distribution model for practical applications of topological data analaysis.

The Betti numbers of random Vietoris-Rips complexes were analyzed by Kahle in \cite{Kahle_2011,LimitBetti}. Because we are interested in the behaviour of Betti numbers as $n \to \infty$, we allow the grouping scale $\epsilon$ to vary with $n$. Let us denote by $\hat{r} = n^{-1/d}$ the average distance \footnote{By possibly normalizing the metric, we assume that no two points of $V$ are further than 1 apart. Hence $V^\epsilon$ is disconnected for $\epsilon = 0$ and fully connected for $\epsilon = 1$.} between two vertices of $V^\epsilon$. Kahle identified three regimes for $\epsilon$ that fully describe the asymptotic behaviour of Betti numbers of $V^\epsilon$. 
\begin{itemize}
    \item \textbf{Subcritical regime:} The subcritical regime occurs when $\epsilon = o(\hat{r})$ and the Vietoris-Rips complex is sparsely connected. In this regime, the Betti numbers experience a transition from vanishing to non-vanishing. In either case, they are upper bounded by \begin{equation}
        \mathbb{E}[\beta_k] = o(n).
    \end{equation}
    \item \textbf{Critical regime:} If $\epsilon = \Theta(\hat r)$ is in the same order of magnitude as $\hat r$, $V$ is said to be in the critical regime. This region is also called the thermodynamic limit and is the regime where Betti numbers take on their maximal value. The expectation value and variance of $\beta_k$ are\begin{equation}
    \mathbb{E}[\beta_k] = \Theta(n), \quad \text{Var}[\beta_k] = \Theta(n).
\end{equation}
\item \textbf{Supercritical regime:} Here, $\epsilon = \omega(\hat r)$ dominates the average separation $\hat r$ and large connected components emerge which do not contribute to the homology of V. Consequently, Betti numbers grow sublinearly and \begin{equation}
    \mathbb{E}[\beta_k] = o(n).
\end{equation}
\end{itemize}%Numerical computations of the expectation value of Betti numbers of a random Vietoris-Rips complex as a function of the grouping scale $\epsilon$ are shown in figure \ref{fig:Betti}. 
In all three regimes, the expectation value of $\beta_k$ grows at most linearly with $n$. Moreover, the variance Var$(\beta_k)$ is also bounded by $\mathcal{O}(n)$. Thus $\beta_k = \mathcal{O}(n)$ asymptotically almost surely (a.a.s.) over random Vietoris-Rips complexes and the LGZ-algorithm applied to random Vietoris-Rips complexes almost always achieves only a quadratic speedup over classical approaches.
%\begin{figure}
%\centering
%\begin{subfigure}{.5\textwidth}
%  \centering
%  \includegraphics[width=1.1\linewidth]{}
%  \caption{}
%    \label{fig:Betti}
%\end{subfigure}%
%\begin{subfigure}{.5\textwidth}
%  \centering
%  \includegraphics[width=.75\linewidth]{}
%  \caption{}
%  \label{fig:GoogleGraph}
%\end{subfigure}
%\caption{\textbf{(a)} The expectation value of the first 6 Betti numbers $\beta_k$ of a random %Vietoris-Rips complex with $n=100$ vertices embedded in $d=4$ dimensions. Each $\beta_k$ %remains close to zero in its subcritical and supercritical regime and peaks in the %thermodynamic limit. Evidently, all expectation values are bounded by $\mathcal{O}(n)$. %Computation courtesy of A. Zomorodian \cite{zomorodian2004computing}. \textbf{(b)} Example of %a graph providing a superpolynomial speedup. The graph consists of $n$ vertices split into %$\mathcal{O}(\log(n))$ clusters (blue circles), such that all clusters are pairwise connected. %Graph construction and image adopted from \cite{GoogleBettiBerry}.}
%\end{figure}

This average-case analysis does of course not exclude the possibility that specific families of graphs exhibit large enough Betti numbers to provide a much better quantum advantage. Indeed, a recent paper by Google Quantum AI \cite{GoogleBettiBerry} provided an example of a graph with a superpolynomial speedup. %, shown in Figure \ref{fig:GoogleGraph}. 
It has Betti numbers scaling as $\beta_{\log(n)-1} \sim (n/\log(n))^{\log(n)},$ which is sufficiently large to give a superpolynomial advantage over the best known classical algorithm. Unlike the present paper, the authors of \cite{GoogleBettiBerry} study the regime where $k$ scales logarithmic with $n$. This relaxes the clique-density condition \eqref{cliqueDense} but removes the possibility for an exponential speedup. To the best of our knowledge, there is currently no known example of a graph that provides an exponential quantum advantage.

Upper bounds on the size of Betti numbers of Vietoris-Rips complexes imply lower bounds on the runtime of the LGZ-algorithm. Goff analyses the maximal size of Betti numbers of Vietoris Rips complexes in \cite{ExtremalBetti}. For the $k$-th Betti number, he proves an upper bound of $\mathcal{O}(n^k)$, which is slightly more prohibitive than the trivial upper bound $\mathcal{O}(n^{k+1}).$ However, the best construction in \cite{ExtremalBetti} only achieves $\beta_k = \mathcal{O}(n^{k/2+1/2})$ (which corresponds to a quartic quantum advantage) and does not saturate this bound. Since the upper bound in \cite{ExtremalBetti} was obtained inductively, it might be loose. Whether or not it can be tightened is an open question that would provide insight into whether any super-polynomial quantum advantage for TDA can be achieved in the regime where $k$ scales as $k = \Theta(n)$.

The above discussion implies that the LGZ-algorithm achieves a Grover-like speedup over the best classical approach in almost all cases and up to a superpolynomial speedup in certain specific cases. This agrees with Theorem \ref{thm:Homology} and the widely-believed conjecture that quantum computers cannot solve \#P-hard or NP-hard problems in polynomial time. While a quadratic speedup is significant, this suggests that quantum algorithms for topological data analysis might not achieve a practical advantage for NISQ-era and early generation fault-tolerant quantum computers \cite{babbush2021focus}. In the next section, we argue that an exponential speedup can be recovered if computational problems beyond TDA are considered.
\section{Quantum Advantage Beyond TDA}
Up to this point, our complexity-theoretic analysis in section 3 and 4 as well as our direct analysis of the LGZ-algorithm in section 5 have exclusively focused on clique complexes, and even more specifically on Vietoris-Rips complexes. While all simplicial complexes appearing in the typical setting of TDA\footnote{In theory, a set of data points could also be represented by the related Čech complex. This representation has the advantage that it is homotopy equivalent to the union of $\epsilon$-balls centered around each data point. However, the construction of Čech complexes is computationally much more expensive than the construction of Vietoris-Rips complexes, hence they are not often used in practice. They also have similarly scaling Betti numbers as Vietoris-Rips complexes \cite{Kahle_2011}, thus we expect similar prohibitive bounds on quantum speedups.} are of this form, the LGZ-algorithm and the ideas behind it can be applied to a much broader class of simplicical complexes. Computing Betti numbers of more general simplicial complexes is an important problem in computational topology \cite{edelsbrunner2022computational,basu2005betti,bayer1998extremal,romer2001generalized}. In this section, we argue that an exponential quantum advantage can be recovered if certain computational problems beyond TDA are considered. To illustrate this, we emphasize three points:
\begin{itemize}
\item The limiting factor for quantum advantage in TDA is the construction of (and subsequent random sampling from) the list of $k$-simplices of a clique complex, given the underlying graph as input. This by itself is an NP-hard problem, thus quantum algorithms solve it only polynomially faster than the best classical algorithm.
\item Given an oracle that provides random sampling from the $k$-simplices in an (abstract) simplicial complex, quantum algorithms for TDA can efficiently compute the $k$-th Betti number of the simplicial complex. Moreover, they can do so sometimes exponentially faster than the best known classical algorithm with access to the same oracle. 
\item Such an oracle is not natural in TDA. Indeed, constructing such an oracle for a clique complex described by its underlying graph is again NP-hard. However, in some applications of computational topology, it is natural to describe simplicial complexes by a method of specifying the set of simplices. This enables the efficient construction of such an oracle.
\end{itemize}

Let us illustrate the first point. The main complexity of quantum algorithms for TDA arises from two factors. First, the clique complex $S$ has to be constructed from the input graph. We call this the \emph{construction} step. Taking again $k = \Theta(n)$ and surpressing factors polynomial in $n$, this step takes time \begin{equation}
   \mathcal{O}(\zeta_k^{-1/2}) = \mathcal{O}\left( \sqrt{\frac{{n\choose k+1}}{|S_k|}}\right)
    \label{fr:complex}
\end{equation} via Grover search, where $|S_k|$ is the number of $k$-simplices in the complex and $n$ is the number of vertices. The complex construction by itself is a limiting factor in the runtime: Efficient sampling from the $k$-simplices of a clique complex, given a description of the underlying graph, allows us to find a maximum clique of the graph. It is therefore an NP-hard problem, which in particular suggests that quantum computers cannot solve it significantly more efficiently than via Grover search. 

The second step, which we call \emph{estimation} step, is to estimate the Betti numbers given the ability to sample from the set of $k$-simplices, and to determine whether a given $k$-simplex is in the complex.   The estimation step can be done via, e.g., quantum phase estimation and quantum counting, which has complexity proportional to \begin{equation}
     \mathcal{O}\left(\frac{1}{\delta}\sqrt{\frac{|S_k|}{\beta_k}}\right).
        \label{fr:betti}
\end{equation} Here, $\delta$ is the multiplicative error and factors of poly($n)$ are again suppressed. It is this second step where quantum algorithms for TDA have the potential to provide an exponential advantage over classical algorithms. To our knowledge, the best classical algorithm can compute the Betti numbers of a simplicial complex with inverse-polynomial spectral gap of the Laplacian $\frac{\lambda_{max}}{\lambda_{min}} = \text{poly}(n)$ from a list of simplices in time $\mathcal{O}(|S_k|)$. This is because determining the kernel dimension of sparse matrices scales at best linearly in the matrix dimension \cite{ubaru2016fast}. For a detailed discussion of the best classical algorithms for estimating Betti numbers, see for example Section IV in \cite{GoogleBettiBerry}.

On the other hand, the LGZ-algorithm can solve the same task up to constant multiplicative error exponentially faster if $\beta_k \approx |S_k|$. In the following subsection, we discuss regimes where the quantum algorithm solves the second step, \emph{estimation}, efficiently. 

%Combining the running time of both complex construction and Betti number estimation yields a complexity proportional to \begin{equation}
   %n^3 \sqrt{\frac{{n\choose k+1}}{|S_k|}}  \sqrt{\frac{|S_k|}{\beta_k}} = n^3 \sqrt{\frac{{n\choose k+1}}{\beta_k}},\end{equation} which we studied in section \ref{sec:bettirandom}. 
\subsection{Complexes beyond Vietoris-Rips}
While Vietoris-Rips complexes do not have sufficiently large Betti numbers for a large quantum speedup, other types of complexes do. For Erdös-Renyi complexes, the LGZ-algorithm solves the \emph{estimation} step exponentially faster than the best known classical approach in asymptotically almost all cases. Random Erdös-Renyi complexes are the clique complexes of random Erdös-Renyi graphs $G(n,p)$. The graph $G(n,p)$ on $n$ vertices with probability parameter $p$ is constructed by randomly connecting vertices with probability $p$, independently from every other edge. Kahle \cite{SizeofRandom} showed that for random Erdös-Renyi complexes, the fraction in \eqref{fr:betti} is asymptotically almost always 1. Specifically, for any $0\leq k<n$ and a probability parameter $n^{-1/k} < p < n^{-1/(k+1)},$ we have \begin{equation}
    \mathbb{E}(\beta_k) \approx \mathbb{E}(|S_k|) \quad \text{ and } \quad \beta_k \approx |S_k|,
\end{equation} i.e. the $k$-th Betti number takes on its maximal possible value $|S_k|$ (the number of $k$-simplices in the complex). Consequently, the estimation step takes time $\mathcal{O}(\text{poly}(n))$ on a quantum computer and time $\mathcal{O}(|S_k|)$ on a classical computer. Since for random Erdös-Renyi complexes, $|S_k| \approx n^{k/2}$, this gives an exponential quantum advantage for the \emph{estimation} step if $k=\Theta(n).$

Similarly, an exponential quantum advantage may appear if we go beyond clique complexes to abstract simplicial complexes. Abstract simplicial complexes are not defined via the cliques of an underlying graph but via a collection of inclusion-maximal simplices that are closed under subset-taking. In general, this collection might be exponentially large and reintroduce the data-loading problem. However, in some typical real world examples such a description is compact. There are many well-known examples of abstract simplicial complexes with exponentially large Betti numbers. One example is the \emph{$k$-skeleton} of the $n$-simplex, which is the simplicial complex on $n+1$ vertices containing all possible simplices of dimension less than or equal to $k.$ These simplicial complexes have close to maximal Betti numbers: The $k$-th Betti number of the $k$-skeleton of an $n$-simplex equals $\beta_k = {n\choose{k+1}}.$ They are also compactly specified by a list of maximal non-simplices. For these and similar abstract simplicial complexes with exponentially large Betti numbers, the LGZ-algorithm solves the \emph{estimation} step efficiently. %On the other hand, it is not clear how classical approaches can perform significantly better than the complexity of computing the Smith normal form $\mathcal{O}(|S_k|^{3.37}).$
\subsection{Oracle Construction}
Solving the \emph{estimation} step efficiently enables the efficient computation of Betti numbers only if we have access to random sampling of $k$-simplices in the complex.  While random sampling of simplices is in general a hard problem,
we note that datasets that permit random simplicial sampling could be available in real-world applications.   

For example, consider the problem of calculating the homology of Facebook, where $k+1$ Facebook users form a simplex if they are members of the same Facebook group (the same $k+1$ users can all be members of more than one group). The resulting simplicial complex is not a clique complex. In this setting, random sampling of simplices can be achieved efficiently by simply querying Facebook for a random element of their stored list of groups, assuming classical access to this list. If we moreover assume coherent \footnote{In general, coherent access might require a large QRAM} access to a membership function $f_k$ (see Section II. C), a quantum algorithm can efficiently solve the \emph{estimation} step by following the steps of the LGZ-algorithm: Randomly sample a $k$-simplex from the complex, apply quantum phase estimation to it with the exponentiated simplicial Laplacian and record the relative frequency of the zero eigenvalue using quantum counting. In this model, a quantum algorithm can thus compute the simplicial homology of Facebook efficiently, provided the Betti numbers of Facebook are comparable to the number of simplices. At the same time, even with access to efficient simplex sampling, the best classical approach seems to require the computation of the nullity of a exponentially large matrix, which takes exponential time \cite{ubaru2016fast}.  %We will give a precise description of a quantum algorithm for this task in a follow-up paper, including the construction of the boundary operator and an interpretation of the resulting homology.

In general, if we are given a list of sets and their members, and define a $k$-simplex to be $k+1$ points all of which lie in the same set, this describes an abstract simplicial complex from which we can efficiently sample simplices. If we further assume coherent access to a membership function for the simplices, a modification of the LGZ-algorithm supplies an exponential advantage over classical algorithms for computing homology when the $k$-th Betti number is large and comparable to the number of $k$-simplices, and the spectral gap is scaling inverse-polynomially.
\section{Complexity of Estimating Normalized Betti Numbers}
Motivated by practical applications of TDA, the previous chapters have focused on computing Betti numbers exactly or up to multiplicative error. From a quantum algorithms perspective, a more natural output is an additive approximations to the \emph{normalized} $k$-th Betti number \begin{equation}
    c_k = \frac{\beta_k}{|S_k|},
\end{equation} i.e. to the $k$-th Betti number divided by the number of $k$-simplices in the simplicial complex. Indeed, this is the original output of the LGZ-algorithm \cite{LloydAlgo}, although there mistakenly labeled a multiplicative approximation. Later work \cite{reviewBetti,GoogleBettiBerry} focuses on multiplicative approximations. As we discussed in sections 2 and 5, in almost all cases where $c_k$ can be estimated in polynomial time, the normalized Betti number is exponentially small. Any inverse polynomial additive error $\varepsilon$ can therefore not distinguish it from zero, in agreement with the hardness of \textbf{Homology}. 

Nevertheless, the complexity of estimating normalized Betti numbers (BNE) is a highly interesting topic offering a potentially exponential quantum advantage. In a recent paper \cite{quantumAdvantage}, Dunjko et al. proved  that a generalization of this task, called low-lying spectral density estimation (LLSD), is DQC1-hard. In another recent work, Cade et al. \cite{cade2021complexity} proved that estimating normalized Betti numbers for arbitrary chain complexes (as opposed to clique complexes) is also DQC1-hard. The complexity class DQC1 (or deterministic quantum computation with one clean qubit) is the class of decision problems solvable by a one clean qubit machine in polynomial time with low error probability (e.g. at most 1/3 for all instances). A \emph{one clean qubit machine} is a $n$-qubit system, where the first qubit starts out in the pure state $\ket{0}$, while all other $n-1$ qubits start out in the maximally mixed state. The initial density matrix of the one clean qubit machine is thus \begin{equation} \rho_{\text{DQC1}}= \ket{0}\bra{0} \tens \frac{1}{2^{n-1}} \mathbb{1}_{n-1},\end{equation} where $\mathbb{1}_{n-1}$ is the identity matrix on $n-1$ qubits. We are then allowed to apply any polynomial-sized quantum circuit to $\rho$ and subsequently measure the first qubit.

The results of \cite{quantumAdvantage,cade2021complexity} imply a relationship between the problem of estimating normalized Betti numbers and other well-known DQC1-complete problems, such as the \emph{power-of-one-qubit} task of estimating normalized traces of unitaries \cite{Knill_1998} and the approximation of the Jones polynomial at a fifth root of unity for the trace closure, due to Shor and Jordan  \cite{Pshor}. Theorem \ref{thm:Betti} and Theorem \ref{thm:Homology} further show that both the Betti number problem and the Jones polynomial problem are \#P-hard in the exact case and that additive approximations to both problems are not value-distinguishing \cite{kuperberg2015hard}.
It is widely believed that the one clean qubit model can not be simulated by a classical computer in polynomial time. 

How do the DQC1-completeness results of \cite{quantumAdvantage,cade2021complexity} relate to Theorem \ref{thm:Betti} and Theorem \ref{thm:Homology}? As emphasized by the respective authors, it is not clear whether either result actually applies to quantum algorithms for TDA, as they both concern (possibly strong) generalizations of the central task of TDA, which is estimating Betti numbers of Vietoris-Rips complexes.  
Let us first discuss the potential complexity-theoretic gap between BNE and LLSD, as defined in \cite{quantumAdvantage}, Section 3A. BNE solves LLSD only if the Hermitian matrix that is used as input for LLSD is the simplicial Laplacian $\Delta_k$ of a simplicial complex. Whether LLSD remains DQC1-hard if restricted to simplicial Laplacians remains an open problem. Similarly, the results of \cite{cade2021complexity} apply to more general chain complexes, and it is not clear whether the DQC1-hardness remains if the input is restricted to clique complexes. For arbitrary chain complexes, the Betti number estimation problem is trivially NP-hard.

%Let BNE, ABNE and LLSD be the computational problems as defined in \cite{quantumAdvantage}, Section 3A. Informally, BNE estimates the normalized number of zero-eigenvalues of a simplicial Laplacian $\Delta_k$, i.e. the normalized $k$-th Betti number, up to inverse polynomial additive error. ABNE estimates the normalized number of eigenvalues of a simplicial Laplacian $\Delta_k$ that are \emph{less than $\delta$} up to inverse polynomial additive error. Here, $\delta$ is a precision parameter scaling at least inverse polynomial in $n$. LLSD estimates the normalized number of eigenvalues of a \emph{arbitrary Hermitian matrix} $\Delta_k$ that are \emph{less than $\delta$} up to inverse polynomial additive error.

If BNE were shown to be DQC1-hard, this would greatly strengthen the claim that the version of the LGZ-algorithm targeting an additive approximation to the normalized Betti number provides an exponential quantum advantage. Strengthening the above hardness results is thus an exciting direction for further research. Nevertheless, it would not contradict Theorem \ref{thm:Betti} and Theorem \ref{thm:Homology}, since a computational advantage for additively estimating normalized Betti numbers does not necessarily translate to a computational advantage for multiplicatively estimating Betti numbers. %To illustrate this, note that the \emph{power-of-one-qubit} algorithm \cite{Knill_1998} solves the DQC1-complete task of estimating the normalized trace of an arbitrary unitary $U$, but does not provide a computational advantage over classical algorithms for estimating the actual trace of $U$. 
\section{Discussion}
In this paper, we investigated the computational complexity of computing Betti numbers in topological data analysis. We proved that the problem of computing Betti numbers of clique complexes exactly is \#P-hard, while the problem of estimating them up to any multiplicative error is NP-hard. Moreover, we show that both problems retain their hardness if the input is restricted to clique-dense complexes, a regime of measure zero in which the LGZ-algorithm performs best. Because quantum computers are not expected to be able to solve \#P-hard or NP-hard problems in subexponential time, our results imply that quantum algorithms for TDA run in exponential time in the worst case. Contrary to previous claims, this holds even in the optimal regime of clique-dense complexes. We verify our results through a detailed analysis of random Vietoris-Rips complexes, from which we conclude that the LGZ-algorithm achieves a quadratic speedup over the best classical algorithm for asymptotically almost all inputs. 

Our results show that the bottleneck of quantum algorithms for TDA is the construction of a clique complex from a description of its underlying graph. If the simplicial complex is instead specified in a way that allows random sampling from the simplices in the complex, a modification of quantum algorithms for TDA can compute Betti numbers efficiently if certain conditions are met. Under the same conditions, classical algorithms for estimating Betti numbers still require a runtime exponential in the number of data points. In this case, an exponential quantum advantage is recovered.  We note that data composed of lists of sets and their members takes the form required to retain an exponential quantum advantage, and provide examples where such input models appear naturally. 

\begin{acknowledgments}
A. Schmidhuber acknowledges helpful discussions with Ryan Babbush and Marcos Crichigno. S. Lloyd acknowledges helpful discussions with Michele Reilly and was funded by DARPA and the DOE. 
\end{acknowledgments}

\appendix

\bibliography{references}% Produces the bibliography via BibTeX.

\end{document}